\documentclass[12pt]{article}

\usepackage{scicite}
\usepackage{times}

\usepackage{amsmath}
\usepackage{amssymb}
\usepackage{txfonts}
\usepackage{mathtools}
\usepackage{quantum}
\usepackage{color}
\usepackage[all]{xy}
\usepackage{caption}
\usepackage{subcaption}
\usepackage{array}
\usepackage{enumerate}

\usepackage{bibentry} 

\usepackage[colorlinks,linktocpage,hypertexnames=false,pdfpagelabels]{hyperref}
\usepackage[amsmath,thmmarks,hyperref]{ntheorem}
\usepackage[capitalise]{cleveref}

\newtheorem{theorem}{Theorem}
\newtheorem{lemma}[theorem]{Lemma}
\newtheorem{corollary}[theorem]{Corollary}

\newtheorem{definition}[theorem]{Definition}

\theorembodyfont{\normalfont}
\newtheorem{remark}[theorem]{Remark}

\theorembodyfont{\normalfont}
\newtheorem{example}[theorem]{Example}

\theoremstyle{nonumberplain}
\theorembodyfont{\normalfont}
\theoremsymbol{$\Box$}
\newtheorem{proof}{Proof}

\theoremstyle{nonumberbreak}
\theorembodyfont{\normalfont}
\theoremsymbol{$\Box$}

\setcounter{figure}{0}
\makeatletter
\renewcommand{\thefigure}{\@arabic\c@figure}
\makeatother

\setcounter{equation}{0}
\makeatletter
\renewcommand{\theequation}{\@arabic\c@equation}
\makeatother

\crefname{part}{Part}{Parts}
\crefname{step}{Step}{Steps}


\renewcommand{\emptyset}{\varnothing}
\newcommand{\xnor}{\odot}

\DeclareMathOperator{\poly}{poly}

\topmargin 0.0cm
\oddsidemargin 0.2cm
\textwidth 16cm
\textheight 21cm
\footskip 1.0cm

\newenvironment{sciabstract}{%
\begin{quote} \bf}
{\end{quote}}

\newcounter{lastnote}


%
\begin{document}

\title{Simple universal models capture all classical spin physics}

\author{Gemma De las Cuevas$^{\ast 1}$ and Toby S. Cubitt$^{2}$}
\date{}

 \maketitle

\vspace{-1cm}

\begin{center}
{\small $^1$Max Planck Institute for Quantum Optics, Hans-Kopfermann-Str.\ 1, D-85748 Garching, Germany\\
$^\ast$To whom correspondence should be addressed. E-mail: gemma.delascuevas@mpq.mpg.de\\
$^2$Department of Computer Science, University College London, 
  Gower Street, London WC1E 6EA, UK}
\end{center}

 \begin{sciabstract}
   Spin models are used in many studies of complex systems---be it condensed matter physics, neural networks, or economics---as they exhibit rich macroscopic behaviour despite their microscopic simplicity.
   Here we prove that all the physics of \emph{every} classical spin model is reproduced in the low-energy sector of certain `universal models'.
   This means that (i)~the low energy spectrum of the universal model reproduces the entire spectrum of the original model to any desired precision, (ii)~the corresponding spin configurations of the original model are also reproduced in the universal model, (iii)~the partition function is approximated to any desired precision, and (iv)~the overhead in terms of number of spins and interactions is at most polynomial.
   This holds for classical models with discrete or continuous degrees of freedom.
   We prove necessary and sufficient conditions for a spin model to be universal, and show that one of the simplest and most widely studied spin models, the 2D Ising model with fields, is universal.
 \end{sciabstract}

The description of systems with many interacting degrees of freedom is a ubiquitous problem across the natural and social sciences.
Be it electrons in a material, neurons interacting through synapses, or speculative agents in a market, the challenge is to simplify the system so that it becomes tractable while capturing some of the relevant features of the real system.
Spin models are one way of addressing this challenge.
While originally introduced in condensed matter physics to study magnetic materials \cite{Di04b,Bi05,Ni11,Ba82}, they have now permeated many other disciplines, including quantum gravity \cite{Am09a}, error-correcting codes~\cite{Ni01}, percolation theory~\cite{Ni11}, graph theory~\cite{Bo98}, neural networks~\cite{Ro96}, protein folding ~\cite{Br91b}, and trading models in stock markets~\cite{Du99b}.

The reason for this success is that spin models are microscopically simple, yet their versatile interactions lead to a very wide variety of macroscopic behaviour.
Formally, a spin model is specified by a set of degrees of freedom, the ``spins'', and a cost function, or ``Hamiltonian'', $H$ which specifies the interaction pattern as well as the type and strength of interactions among the spins.
(In physics, the Hamiltonian specifies the energy of each possible spin configuration; in other contexts, this energy value may quantify a more abstract ``cost'' associated with a configuration.)

Naturally, this definition encompasses a wide range of models, including, e.g., attractive and/or repulsive interactions, regular and irregular interaction patterns, models in different spatial dimensions, with different symmetries (e.g.~``conventional'' spin models with global symmetries, versus models with local symmetries such as lattice gauge theories), many-body interactions (e.g.\ vertex models and edge models~\cite{De13b}), and more.
We will use the word ``model'' to refer to a (generally infinite) family of spin Hamiltonians.
Different Hamiltonians within the same model are typically related in some natural way.
For example, the ``2D Ising model with fields'' is the family of Hamiltonians of the form
\begin{equation}\label{eq:ising}
 H_G(\sigma) = \sum_{\langle i,j \rangle} J_{ij}\sigma_i \sigma_{j} + \sum_i r_i \sigma_i,
\end{equation}
where $\sigma = \sigma_1,\sigma_2,\dots,\sigma_n$ is a configuration of Ising (i.e.\ two-level) spins $\sigma_i\in\{-1,1\}$ on a 2D square lattice, $\langle i,j\rangle$ denotes neigbouring spins, and $J_{ij}$ and $r_i$ are real numbers specifying the coupling strengths and local fields, respectively.

Here we show that there exist certain spin models, which we call \emph{universal}, whose low energy sector can reproduce the complete physics of \emph{any} other classical spin model.
What does it mean to ``reproduce the complete physics''?
Informally, we say that a spin model with Hamiltonian $H$ \emph{simulates} another one $H'$ if (i)~the energy levels of $H$ below a threshold $\Delta$ reproduce the energy levels of $H'$, (ii)~there is a fixed subset $P$ of the spins of $H$ -- which we call the ``physical spins'' -- whose configuration for each energy level below $\Delta$ reproduces the spin configuration of the corresponding energy level of $H'$, and (iii)~the partition function of $H$ reproduces that of $H'$.
Note that from the partition function one can derive all equilibrium thermodynamical properties of the system.
A \emph{universal} model is then a model that can simulate \emph{any} other spin model.

\section*{Precise Statement of Results}
More precisely, we will denote spin degrees of freedom---discrete or continuous---by a string of spin states $\sigma = \sigma_1,\sigma_2,\dots,\sigma_n$.
For $q$-level Ising spins (i.e.\ discrete degrees of freedom with a finite number $q$ of distinct states), we can label the states arbitrarily by integers: $\sigma_i \in \{1,\dots,q\}$.
For continuous spins, a spin state is represented by a unit vector: $\sigma_i \in S^D$ (where $S^D$ is the $D$-dimensional unit sphere; often $D=2$ or~3).
We write $\sigma_R$ to refer to the configuration of a subset $R$ of the spins.
We will often refer to the Hamiltonian being simulated as the \emph{target} Hamiltonian.
With this, we can define simulation more precisely:

\noindent
Let $\sigma'=\sigma'_1,\sigma'_2,\dots$ be the spin degrees of freedom of a target Hamiltonian $H'$.
We say that a spin model with spin degrees of freedom $\sigma=\sigma_1,\sigma_2,\dots$ can \emph{simulate} $H'$ if it satisfies \emph{all three} of the following:
\begin{enumerate}
\item For any $\Delta > \max_{\sigma'} H'(\sigma')$ and any $0 < \delta < 1$, there exists a Hamiltonian $H$ in the model whose low-lying energy levels $E_\sigma=H(\sigma)<\Delta$ approximate the energy levels $E'_{\sigma'}=H'(\sigma')$ of $H'$ to within additive error $\delta$.
 \label{part:eigenvalues}
\item For every spin $\sigma'_i$ in $H'$, there exists a fixed subset $P_i$ of the spins of $H$ (independent of $\Delta$) such that states of $\sigma'_i$ are uniquely identified with configurations of $\sigma_{P_i}$, such that $|E'_{\sigma'} - E_\sigma| \leq \delta$ for any energy level $E_\sigma<\Delta$. We refer to the spins $P=\cup P_i$ in the simulation that correspond to the spins of the target model as the ``physical spins''.
 \label{part:eigenstates}
\item The partition function $Z_H(\beta) = \sum_\sigma e^{-\beta H(\sigma)}$ of $H$ reproduces the partition function $Z_{H'}(\beta) = \sum_{s'} e^{-\beta H'(s')}$ of $H'$ up to constant rescaling, to within arbitrarily small error: $Z_{H'}(\beta) = \gamma (1+\delta) Z_{H}(\beta) + O(e^{-\Delta})$ for some known constant $\gamma$.
 \label{part:partition_function}
\end{enumerate}

Rescaling of the partition function must necessarily be permitted in the definition of simulation, as the universal model will in general have more degrees of freedom than the target model.
(Note that the magnitude of the rescaling has no impact on the efficiency of the simulation, as long as the rescaling is a known, constant value.)
The following trivial example makes this clear.
Consider adding a single $q$-level spin to a system, which does not interact with anything else.
Clearly this new system simulates the original one (just ignore the extra particle).
However, its partition function is rescaled by a factor of $q$.

We say that a model is \emph{universal} if, for any Hamiltonian $H' = \sum_{I=1}^m h_I$ on $n$ spins composed of $m$ separate $k$-body terms, $H'$ can be simulated by some Hamiltonian $H$ from the model specified by $\poly(m,2^k,1/\delta)$ parameters and acting on $\poly(n,m,2^k,1/\delta)$ spins, with the size of the set of physical spins scaling at most as $|P| = \poly(1/\delta)$.

In the following, we establish necessary and sufficient conditions for a model to be universal.
We use these to show that one of the simplest and most widely studied spin models, the 2D Ising model with fields, is universal.
Note that our definition of simulation is very strong: it requires that the target model can be approximated with an arbitrarily large energy cut-off $\Delta$ and to arbitrarily good accuracy $\delta$.
In general, this accuracy is achieved at the expense of increasing the coupling strengths in the universal model.
When both the universal model and the target Hamiltonian have discrete degrees of freedom (e.g.\ the 2D Ising model with fields), the energy levels and configurations are reproduced \emph{exactly}, i.e.\ $\delta=0$.

The first property required for a universal spin model concerns the computational complexity of the ground state energy problem (\textsc{GSE}) of the model.
In this problem, we are asked whether the ground state energy of the system is below some given value $K$.
Recall that NP is the class of Yes/No problems for which every ``Yes'' instance has a ``certificate'' or ``proof'' that can be verified in polynomial time, whereas all certificates are rejected in polynomial time if it is a ``No'' instance.
A problem in NP is NP-complete if every other problem in NP can be efficiently transformed into it (a `polynomial-time reduction').
A canonical NP-complete problem is \textsc{SAT}, which asks whether there is an assignment to the variables of a Boolean formula for which the formula evaluates to \texttt{true} (is `satisfiable').

It is a classic result that the \textsc{GSE} for general spin models is NP-complete~\cite{Bar82}.
This can be seen by providing a polynomial-time reduction from \textsc{SAT}.
That is, given a Boolean formula $\phi$, one constructs a Hamiltonian $H$ such that $\phi$ is satisfiable if and only if there is a spin configuration $\sigma$ such that $H(\sigma)\leq K$, where $H$ and $K$ are determined by $\phi$, and the number of spins and parameters in $H$ is at most polynomially larger than the number of Boolean variables.
For universality we will need a slightly stronger form of reduction: a \emph{faithful} reduction from \textsc{SAT}.
This should additionally preserve the structure of the solution, in the sense that every satisfying assignment of $\phi$ should be in one-to-one correspondence with a ground state configuration of $H$, when the latter is restricted to a subset of spins $P$.

The second condition, which we call \emph{closure}, concerns combining different Hamiltonians from the same model.
We say that a model is \emph{closed} if, for any pair of Hamiltonians $H^{(1)}_{A}$ and $H^{(2)}_{B}$ in the model acting on arbitrary sets of spins $A$, $B$ (which could overlap) there is another Hamiltonian $H$ in the model that simulates $H^{(1)}_{A} + H^{(2)}_{B}$.
If the model places no constraints on the interaction pattern or coupling strengths (for example, it is the set of Ising models on any graph), then it is trivially closed.
Closure is non-trivial if the interaction pattern of the spins is restricted in some way (e.g.\ to a lattice).

Our first main result is that: \textit{A spin model is universal if and only if it is closed and its ground state energy problem admits a faithful, polynomial-time reduction from \textsc{SAT}.}
Our second main result is that the 2D Ising model with fields (see eq.~\eqref{eq:ising}) satisfies these two conditions, hence: \textit{The 2D Ising model with fields is universal.}

\section*{Universality construction}
The intuition behind our results is that closure allows large systems to be built up by combining basic building blocks, whereas the faithful \textsc{SAT} reduction guarantees that the model is sufficiently rich.
More precisely, the \textsc{SAT} reduction allows us to encode universal computation into the ground state.
We use this to isolate one bit of information about the spin configuration, and localise it in a single ``flag spin''.
To give energy $E$ to a particular spin configuration $\sigma$, we make the state of the flag spin indicate whether or not the other spins are in the state $\sigma$ (say, spin-up if they are, spin-down if not).
Adding a term to the Hamiltonian that gives energy $E$ to the spin-up state of the flag spin produces the desired energy level for the $\sigma$ configuration.
We do this for every energy level of each local interaction of the target Hamiltonian, and then combine all the resulting Hamiltonian terms using closure.

To see how this works, consider the example of the Ising model with fields on an arbitrary graph.
First, recall that any boolean expression can be rewritten in terms of a conjunction (boolean AND, $\land$) of clauses, where each clause is the disjunction (boolean OR, $\lor$) of three variables or their negations (boolean NOT, $\neg$).
E.g.\ consider the boolean function $\phi_{00}(x_1,x_2,x_3) = 1$ if $(x_1,x_2,x_3) = (0,0,1)$ or $(0,1,0)$ or $(1,0,0)$ or $(1,1,0)$, and $\phi_{00}(x_1,x_2,x_3) = 0$ otherwise.
We can write this as
\begin{equation}
  \phi_{00}(x_1,x_2,x_3)
  = ( x_1 \lor x_2 \lor x_3) \land
    ( x_1 \lor \neg x_2 \lor \neg x_3) \land
    (\neg x_1 \lor  x_2 \lor \neg x_3) \land
    (\neg x_1 \lor \neg x_2 \lor \neg x_3).
 \label{eq:phi-main}
\end{equation}

Combining results from \cite{Bar82,Ga79} we obtain a faithful reduction from any boolean formula to the Ising model with fields (see also \cite{Gu12}).
First, we identify each boolean variable $x_i$ (taking values $0/1$) with an Ising spin $\sigma_i = 2x_i-1$ (taking values $-1/+1$).
We introduce an additional spin $\sigma_{\neg i}$ (whose state will be identified with $\neg x_i$) coupled to $\sigma_{i}$  by an antiferromagnetic Ising interaction:
\begin{equation}
 h_i = \frac{1}{2} \sigma_i\sigma_{\neg i} + \frac{1}{2}.
\end{equation}
For each clause $c$, we introduce three auxiliary spins $\sigma^{(c)}_1,\sigma^{(c)}_2,\sigma^{(c)}_3$, coupled to each other in a triangle of Ising interactions and local fields, and couple $\sigma^{(c)}_i$ to the spin corresponding to the literal (i.e.\ variable or its negation) appearing in the $i$th position of the clause.
  The Hamiltonian for the clause $(x_1\lor x_2 \lor x_3)$ is
\begin{equation}
  h_c = -\sum_{i=1}^3 \sigma^{(c)}_i - \frac{1}{2}\sum_{i=1}^3\sigma_i
        + \frac{1}{2} \sum_{\begin{subarray}{c} i,j=1\\ i<j \end{subarray}}^3
          \sigma^{(c)}_i \: \sigma^{(c)}_j
        + \frac{1}{2} \sum_{i=1}^3 \sigma_{i}\: \sigma^{(c)}_i + \frac{5}{2}.
\end{equation}
For a clause involving a negated variable $\neg x_i$, the next-to-last term would contain $\sigma_{\neg i}$ instead of $\sigma_i$.

For example, the Hamiltonian for the formula $\phi_{00}$ from eq.~\eqref{eq:phi-main} is
\begin{equation}
 H_{00} = \sum_{i=1}^{3} h_{i} + \sum_{c=1}^{4} h_c ,
 \label{eq:H}
\end{equation}
where the second sum runs over the four clauses in eq.~\eqref{eq:phi-main} (see Fig.~\ref{fig:Fig3}).
One can easily verify that the four ground states of $H_{00}$ have energy~0, and in any ground state, $\sigma_{3}=1$ only when $(\sigma_{1},\sigma_{2})=(-1,-1)$, otherwise $\sigma_{3}=-1$.
Thus, $\sigma_{3}$ acts as a ``flag'' for configuration $(-1,-1)$ of the first two spins.

A flag spin $b$ for a general spin configuration $(\sigma_1,\sigma_2,\dots,\sigma_n)$ is constructed in the same way from the boolean function
\begin{equation}
 \phi(x_1,x_2,\dots,x_n,b) =
 \begin{cases}
  1 & b=1 \text{ and } x_i = \frac{1+\sigma_i}{2} \text{ for all } i\\
  0 & \text{otherwise}.
 \end{cases}
\end{equation}
We ``penalise'' configurations where the flag spin is incorrect by multiplying the whole Hamiltonian by a large constant $\Delta$.

Finally, to reproduce one energy level of the target model, we add a local field $E'$ to the corresponding flag spin, where $E'$ is chosen to be the energy of the target model in this configuration.
For example, to produce an energy level $E'$ for the spin configuration $(-1,-1)$, we use
\begin{equation}
 H= \Delta H_{00} + E'\:\frac{\sigma_{3}+1}{2}.
 \label{eq:H4}
\end{equation}
One can verify directly that this has exactly one energy level below $\Delta$ in which $(\sigma_1,\sigma_2) = (-1,-1)$, and this has energy $E'$.

This construction already shows that the Ising model with fields on an arbitrary graph (with real coupling strengths and fields) is universal.
This is easier than the general case, because closure is trivial (if $H_1$ and $H_2$ are both Ising Hamiltonians then $H_1 + H_2$ is also an Ising Hamiltonian); the model includes arbitrary local field terms, which lets us simply add the second term in \eqref{eq:H4}; and the construction automatically reproduces the partition function (with $\gamma=1/576^{m}$ for a target hamiltonian with $m$ two-body terms).
Nonetheless, the general proof~ \cite{SM} uses the same ideas, is also constructive, and similarly only introduces a polynomial overhead in the number of spins and interactions.
The above proof can be extended to the 2D Ising with fields (with inhomogeneous couplings and local fields), thereby showing that the 2D Ising model with fields is universal.
Full technical details, including precise expressions for the polynomial simulation overhead, are given in the Supplementary Material.

The fact that the Ising model with fields is closed and has a faithful reduction from \textsc{SAT} shows that these conditions are necessary for universality: if it is universal, one of the models it must be able to simulate is the Ising model.

We have focused so far on models with discrete spin degrees of freedom.
For universal models with continuous spins, the requirement of a faithful reduction from \textsc{SAT} implies that the model is able to approximate energy distributions that are sharply peaked around a set of configurations that correspond to discrete boolean values.
Thus the arguments for the discrete case essentially go through unchanged.
(Such sharply peaked energy distributions seem somewhat artificial, so our results suggest that natural models with continuous degrees of freedom will typically not be universal.)
To simulate target models with continuous spins, the idea is to discretize the continuous degrees of freedom sufficiently finely to give a good approximation, and then simulate this discretized version.
One can show that the overhead from this discretization scales favourably with the precision of the approximation \cite{SM}.

\section*{Discussion}
The role of universal models for classical spin Hamiltonians is analogous to that of universal Turing Machines for classical computation (Fig.~\ref{fig:utm-uhs}).
Just as choosing the input to a universal Turing Machine allows it to simulate any other computation, choosing the parameters of a universal model allows it to simulate any other Hamiltonian $H'$.
Moreover, our proof is constructive: it provides the parameters needed for the universal model to simulate $H'$---indeed, these parameters can be computed efficiently from the description of $H'$.

The existence of universal models has intriguing implications for our theoretical understanding of classical many-body physics.
It means that, in different parts of its phase diagram, a universal model will reproduce every phase of every other spin model.
In some sense, this can be viewed as an inversion of the usual renormalisation group flow: by ``fine-graining'' the model and introducing additional short-range parameters, all models (including models in different universality classes) are revealed to be specific cases of the universal model (see Fig.~\ref{fig:idea-RG}).

One might have assumed that the physics of a many-body system would have characteristics determined by the number of spatial dimensions, or the structure of the interaction pattern, or whether the interactions are 2-body or many-body, or the symmetry of its interactions, or whether the local degrees of freedom are continuous or discrete.
The existence of universal models implies that, for models with inhomogeneous couplings, none of these are the case.
E.g.~there can be nothing uniquely characteristic to the physics of spin systems in three spatial dimensions, since all physical properties of a 3D system can be simulated in a 2D Ising model with inhomogeneous couplings. Another way of viewing this is that the inhomogeneity of the couplings destroys all properties related to spatial dimension, symmetries, etc.
Similarly, in a sense there can be no physical properties specific to models with continuous degrees of freedom, since these can be simulated to any desired precision using Ising spins.
Our result also implies and explains the recently found ``completeness results'' \cite{Va08,Ka12b,De09a,De09b,Ka12,Xu11}, where a model is called ``complete'' if its partition function can equal (up to a factor) the partition function of any other model~\cite{Va08}.
Any universal model is complete by taking $\Delta$ to $\infty$ \cite{SM}.

Simulating a target model using a universal model necessarily incurs some overhead.
In general, this is unlikely to be useful for numerical computations, since simulating the target model directly will usually be more efficient.
However, sometimes one is interested in \emph{physically constructing} a particular spin model.
Indeed, in the quantum setting this is precisely the goal of ``analogue'' or ``physical'' Hamiltonian simulation which has already been demonstrated experimentally in specific cases~\cite{Na12b}.
Universal models are potentially interesting in this setting of physical simulation.
Our results imply that one is free to choose a universal model with interactions that are easier to generate in a physical system (e.g.\ two-body, nearest-neighbour interactions), and this will be sufficient to simulate \emph{any} other spin model, even ones with interactions that are difficult to produce directly (e.g.\ long-range, many-body interactions).
Examples of simulation of specific models with their precise polynomial overheads are given in the Supplementary Material.
These constructions can be optimised for a specific pair of universal and target model. Finding optimal constructions in different physical settings is an interesting open problem.

It is not easy in general to determine whether the \textsc{GSE} of a model has a reduction from \textsc{SAT}, but for models with 2-level spins a complete classification is known \cite{Cr92,Sc78}.
This could lead to a classification of all universal models with 2-level spins.
Another interesting possibility is whether translationally-invariant models can be universal.
The NP-hardness of the \textsc{GSE} of certain translationally-invariant models suggests this may be a possibility \cite{Go10b}.

Finally, it is interesting to ask whether such (efficient) universal models exist for the simulation of quantum Hamiltonians, particularly given the potential applications to physical Hamiltonian simulation.
The QMA-completeness of various important quantum Hamiltonians~\cite{Sc09e,Ch13c} and the ability to efficiently simulate Hamiltonian dynamics~\cite{Ll96}, could pave the way to quantum generalisations of our results.

 \section*{Acknowledgements}
 We thank D.
 P\'erez-Garc\'ia, W.\ D\"ur and M.\ Van den Nest for discussions.
 GDLC acknowledges support from the Alexander von Humboldt foundation and SIQS.
 TSC is supported by the Royal Society.

 \noindent
 This work was made possible through the support of grant \#48322 from the John Templeton Foundation.
 The opinions expressed in this publication are those of the authors and do not necessarily reflect the views of the John Templeton Foundation.

 The authors declare no competing financial interests.

\newpage

 \begin{figure}[t]
  \centering \includegraphics[width=0.7\textwidth]{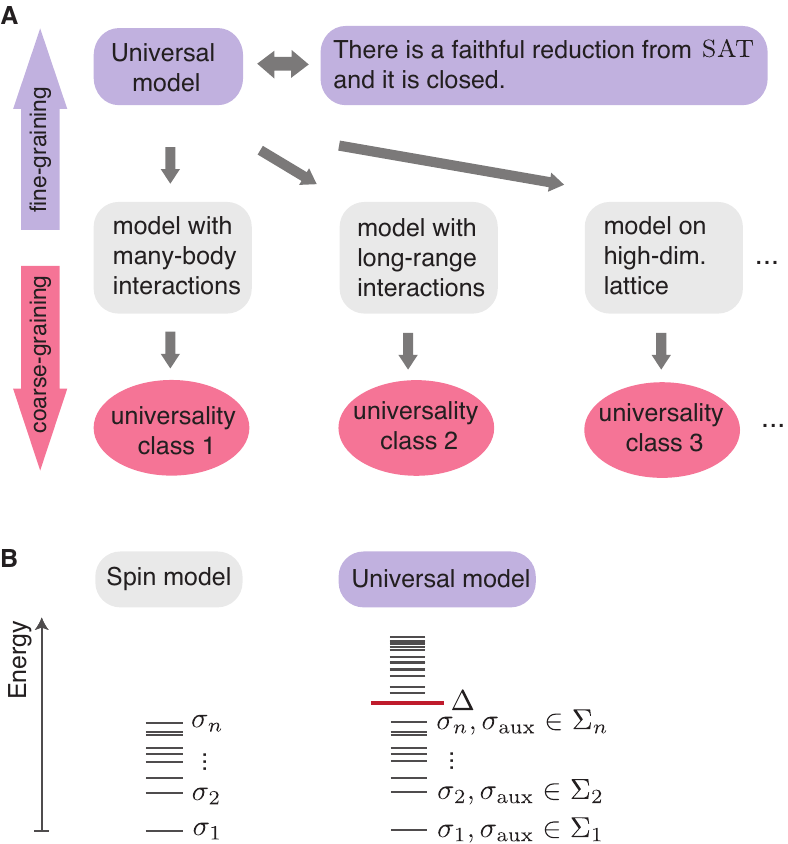}
  \caption{({\bf A}) A spin model is universal if and only if it its ground state energy problem admits a faithful reduction from \textsc{SAT} and it is closed.
    This means that there exists a fine-graining procedure from any spin model ---including models with many-body interactions, long-range interactions, or defined on high dimensional lattices--- that transforms it to the low-energy sector of the universal model.
    On the other hand, coarse-graining different spin models leads to a classification into different universality classes.
    ({\bf B}) For any spin Hamiltonian, the parameters of the universal model can be chosen so that its spectrum below a threshold $\Delta$ is identical to the entire spectrum of the target Hamiltonian.
    Moreover, the spin configuration of each energy level, $\sigma_j$, is reproduced in a subset of the spins of the universal model.
    Finally (not shown) the two partition functions are identical up to a rescaling and an exponentially small factor in $\Delta$.
  }
  \label{fig:idea-RG}
 \end{figure}

 \begin{figure}[t]
  \centering \includegraphics[width=0.75\textwidth]{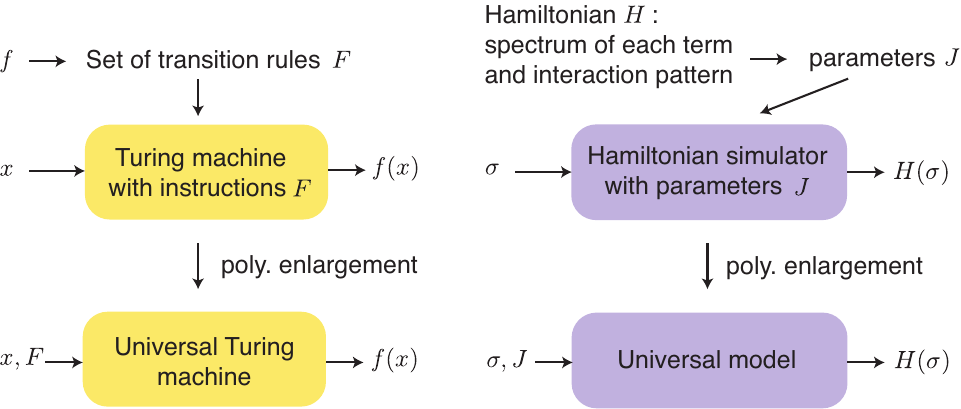}
  \caption{Computation vs.
    Hamiltonian simulation.
    (Left) Any given Turing machine computes some function $f(x)$ of its input $x$.
    Knowing the transition rules of a specific Turing Machine, we can choose the input to a Universal Turing machine in such a way that it simulates the original machine (at the expense of a polynomial overhead).
    (Right) Similarly, any given spin Hamiltonian assigns an energy $H(\sigma)$ to each spin configuration $\sigma$.
    Knowing the energy levels of the individual terms of a specific Hamiltonian, we can choose the parameters of a universal model in such a way that it simulates the original Hamiltonian (at the expense of a polynomial enlargement).}
  \label{fig:utm-uhs}
 \end{figure}

 \begin{figure}[t]
  \centering \includegraphics[width=0.80\textwidth]{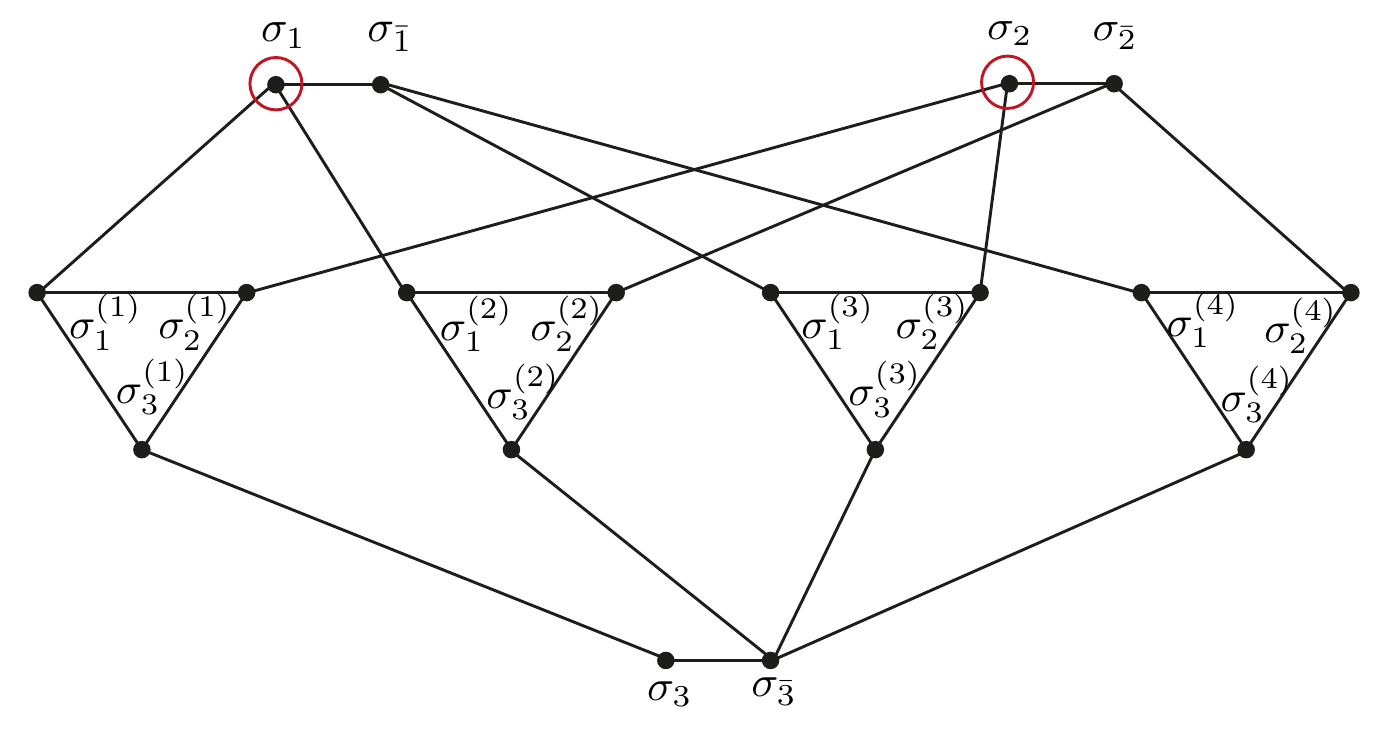}
  \caption{The Ising model with Hamiltonian $H_{00}$ (eq.~\eqref{eq:H}) corresponding to formula $\phi_{00}$ (eq.~\eqref{eq:phi-main}) is defined on the graph shown here. The spins marked in red are those belonging to the physical set $P$.}
  \label{fig:Fig3}
 \end{figure}


\clearpage
\newpage

%
%
%
%
%
%
%
%
%
%



\setcounter{figure}{0}
\makeatletter
\renewcommand{\thefigure}{S\@arabic\c@figure}
\makeatother

\setcounter{equation}{0}
\makeatletter
\renewcommand{\theequation}{S\@arabic\c@equation}
\makeatother

\begin{center}

{\bf\LARGE Simple universal models capture all classical spin physics} \\
\vspace{.5cm}
{\Large Supplementary Material }

\vspace{1cm}

{\large Gemma De las Cuevas$^{\ast 1}$ and Toby S. Cubitt$^{2}$}

\vspace{.5cm}

{\small$^1$Max Planck Institute for Quantum Optics, Hans-Kopfermann-Str.\ 1, D-85748 Garching, Germany\\
$^\ast$To whom correspondence should be addressed. E-mail: gemma.delascuevas@mpq.mpg.de\\
$^2$Department of Computer Science, University College London, 
  Gower Street, London WC1E 6EA, UK}

\vspace{1cm}

\end{center}

In this Supplementary Material, we give additional technical details and mathematical proofs of the results reported in the main text, as well as a number of examples illustrating the general results by applying them to specific spin models.
The material is structured as follows.
\Cref{sec:prelims} introduces the necessary notation and reviews some basic concepts used later.
\Cref{sec:simulation} discusses and gives precise mathematical definitions of Hamiltonian simulation and universal models, as well as introducing the two properties of a spin model (closure and faithful reduction) that will turn out to characterise all universal models.
\Cref{sec:Ising} gives a full that one of the most important spin models, the 2D Ising models with fields, has both of these properties.
\Cref{sec:discrete} gives full technical details of our main result in the case of discrete spins: the closure and faithful reduction properties characterise all universal spin models.
\Cref{sec:continuous} extends this result to the case of continuous spin degrees of freedom.
Finally, \Cref{sec:examples} applies these results to a number examples, showing how various important spin models can be simulated by specific universal models. The examples also illustrate how the general construction can often be simplified when applied to specific universal or target models.

\section{Preliminaries}\label{sec:prelims}
We start by fixing some notation and definitions.
We will need some basic notions from graph theory.
A \emph{graph} is a pair of sets $G=(V,E)$, such that the elements of $E$ are 2-element subsets of $V$ \cite{Di01b}.
$V$ is called the \emph{vertex} set and $E$ the \emph{edge} set.
$G$ is called a \emph{hypergraph} if the elements of $E$ (called hyperedges) contain an arbitrary number of elements of $V$.
Given two graphs $G=(V,E)$, $G'=(V',E')$, their \emph{union} is defined as $G\cup G':=(V\cup V', E\cup E')$, and their \emph{intersection} as $G\cap G'=(V\cap V', E\cap E')$ \cite{Di01b}.
A graph $G$ is said to be a \emph{minor} of a graph $G'$ if $G$ can be obtained as a sequence of edge contractions (shrinking edges and merging their endpoints) and vertex and edge deletions (removing vertices and edges) applied to $G'$ \cite{Bo98}.

We will call a family of spin Hamiltonians a ``spin model'' (or simply ``model'').
Typically, these families will be infinite, and Hamiltonians in the same family (``model'') will be related in some way.
For example, the planar Ising model is the set of all Hamiltonians with 2-body Ising-type interactions between pairs of spins on a planar interaction graph.
The 2D Ising model additionally restricts the interaction graph to be a two-dimensional square lattice.
Thus, a Hamiltonian in the family of the 2D Ising model with fields is specified by the size of the lattice and the set of coupling strengths between neighbouring Ising spins in the lattice.
Since these particular models will recur frequently, we define them formally here:

\begin{definition}[Ising model with fields]
  \label{def:Isingmodel}
  The ``Ising model with fields'' is the family of Hamiltonians specified by a graph $G=(V,E)$ (with vertex set $V$ and edge set $E$), a \emph{coupling strength} $J_{i,j}\in\R$ for each edge $(i,j)\in E$, a \emph{local magentic field strength} $r_i\in\R$ for each vertex $i\in V$, and a \emph{global energy shift} $K$. An Ising spin $\sigma_i\in\{-1,1\}$ is associated with each vertex $i\in V$, and the Hamiltonian is given by:
  \begin{equation}
    H (\{\sigma_i\}) = \sum_{(i,j)\in E} J_{i,j} \: \sigma_i\sigma_j + \sum_{i \in V} r_i \:\sigma_i + K.
    \label{eq:2DIsing}
  \end{equation}
\end{definition}
(On a cubic lattice with constant fields, this model is sometimes called the Edwards--Anderson model \cite{Me09}.)

\begin{definition}[2D Ising model with fields]
  \label{def:2DIsing}
  The ``2D Ising model with fields'' is the subfamily of the Ising model with fields (\cref{def:Isingmodel}) in which $G$ is restricted to be a 2D square lattice.
\end{definition}
Note that the size of the graph is not fixed in either definition.
(We will occasionally refer to this for brevity as the ``2D Ising model'' where it is clear that the model we are referring to includes the local field terms.)

The \emph{interaction pattern} of a model specifies which spins interact with which.
If spins are associated to vertices $V$, every interaction pattern can be represented by a hypergraph.
A hyperedge containing $k$ vertices would represent a $k$-body interaction, i.e.\ a term in the Hamiltonian that depends on only $k$ of the spins \cite{De13b}. 
For example, in \cref{eq:2DIsing}, each term $J_{i,j}\: \sigma_i\sigma_j$ is a 2-body interaction, whereas the term $r_i \: \sigma_i $ is a single-body interaction.
Throughout this text, we will mostly give examples for models with 1- and 2-body interactions, but the results hold generally.

Given a set of spins $\{\sigma_i\in\{1,2,\ldots, q\}\}$ for $i\in\{1,\ldots,n\}$, a \emph{spin configuration} is an assignment of values to the spin variables, e.g.\ $\sigma_1=1$, $\sigma_2=1$, etc.
We denote the set of variables $\{\sigma_i\}$ simply by $\sigma$ for brevity.
We denote by $H(\sigma)$ the energy of configuration $\sigma$ under Hamiltonian $H$.
$H_R$ denotes a Hamiltonian acting on a set of spins $R$.
If $\sigma$ is a configuration of all the spins, and $R$ is a subset of the spins, then $\sigma_R$ denotes the restriction of $\sigma$ to the subset $R$.

We will also need some basic notions of Boolean logic, which we collect here.
A \emph{Boolean variable} $x$ is a variable that can take two truth values, \texttt{true} or \texttt{false}, which we often identify with 1 and 0, respectively.
Boolean variables can be connected using Boolean connectives such as $\lor$ (logical or), $\land$ (logical and).
$\neg x$ will denote the logical negation of variable $x$ (i.e.\ the operation that inverts 0 and 1).
A \emph{Boolean expression} or \emph{formula} is a combination of Boolean connectives and Boolean variables.
$x_i$ or $\neg x_i$ is called a literal.
Given two Boolean expressions $\phi_1$, $\phi_2$, $\phi_1\lor \phi_2$ is called the \emph{disjunction} of $\phi_1$ and $\phi_2$, and $\phi_1\land \phi_2$ the \emph{conjunction} of these two.

A \emph{truth assignment} (or simply assignment) is an assignment of truth values $\{$\texttt{true}, \texttt{false}$\}^n$ to a finite set of Boolean variables $x_1,\dots,x_n$.
An assignment is \emph{satisfying} for an expression $\phi$ if $\phi$ evaluates to \texttt{true} for that assignment.
A Boolean expression $\phi$ is \emph{satisfiable} if there exists a truth assignment to its variables such that $\phi$ evaluates to true.

A Boolean expression $\phi$ is in \emph{conjunctive normal form} (CNF) if $\phi = \bigwedge_{i=1}^m C_i$ and each of the $C_i$s is the disjunction of one or more literals.
Every such $C_i$ is called a \emph{clause}.
If every clause has three literals, then $\phi$ is said to be in \emph{3 conjunctive normal form} (3CNF).
Every Boolean formula can be expressed in 3CNF \cite{Pa95}.

An $n$-ary \emph{Boolean function} is a function $g:\{\texttt{true},\texttt{false}\}^n \to \{\texttt{true},\texttt{false}\}$.
A Boolean expression $\phi$ with variables $x_1,\ldots, x_n$ expresses the $n$-ary Boolean function $g$ if, for any $n$-tuple of truth values $t=(t_1,\ldots, t_n)$, $g(t)$ is \texttt{true} if $t$ satisfies $\phi$, and $g(t)$ is \texttt{false} if $t$ does not satisfy $\phi$, where $T(x_i)=t_i$ for $i=1,\ldots, n$.
Every Boolean expression expresses some Boolean function.
Conversely, any $n-$ary Boolean function $g$ can be expressed as a Boolean expression $\phi_g$ involving variables $x_1,\ldots, x_n$.

We will denote by $\{e_\ell(x):\mathbb{Z}_{2}^{k}\to \mathbb{Z}_2\}$ the elementary basis for Boolean functions on $k$ bits, $e_\ell(x) = 1$ if and only if (iff)  $x=\ell$, and $e_\ell(x) = 0$ otherwise.
Finally, $x_1\odot x_2$ will denote the Boolean exclusive-nor operation, i.e.\ $x_1\odot x_2=1$ iff $x_1=x_2$.

\section{Hamiltonian Simulation}\label{sec:simulation}
We are now in a position to give a precise definition of Hamiltonian simulation.
When showing that one spin model can simulate another, we call the model we are simulating the ``target model'', and a specific Hamiltonian in that model the ``target Hamiltonian''.
We refer to a model that can simulate \emph{any} other spin model as a ``universal model''.

We will first state and prove our main result under the assumption that the spins of the \emph{target} Hamiltonians are discrete $q$-level Ising spins (i.e.\ they take values in $\{1,\dots,q\}$).
We then generalise the result to continuous classical spin models, in which the spins can rotate continuously, i.e.\ the spins are normalised vectors in $\mathbb{R}^D$, which can equivalently be thought of as points on the $D$-dimensional unit sphere $S^D$.
(In fact, our proof applies equally well to models whose local degrees of freedom take values in an arbitrary compact set, but we will restrict to spins for clarity.)

We will also assume that the spins of the universal Hamiltonian are 2-level Ising spins.
However, unlike the case of continuous \emph{target} models, which requires additional work to prove, the assumption of discrete \emph{universal} models is purely for notational simplicity; the argument for continuous universal models is identical.

\begin{definition}[Hamiltonian Simulation -- discrete case]
  \label{def:simulation}
  We say that a spin model can \emph{simulate} a Hamiltonian $H'$ on Ising spins if it satisfies \emph{all three} of the following:
  \begin{enumerate}
  \item For any $\Delta > \max_{\sigma'} H'(\sigma')$ there exists a Hamiltonian $H$ from the model whose low-lying energy levels $\{E_\sigma=H(\sigma):E_\sigma<\Delta\}$ are identical to the energy levels $\{E'_{\sigma'}=H'(\sigma')\}$ of $H'$.
    \label[part]{part:eigenvalues}
  \item There is a one-to-one correspondence $F(\sigma_{P_i}) = \sigma'_i$ between the states $\sigma'_i$ of each spin in $H'$ and the spin-configurations $\sigma_{P_i}$ of a fixed subset $P_i$ of spins in $H$, such that for all configurations $\sigma$ with energy $E_\sigma<\Delta$, the energy $E'_{\sigma'} = E_\sigma$ for $\sigma'_i = F(\sigma_{P_i})$.
    \label[part]{part:eigenstates}
  \item The partition function $Z_H(\beta) = \sum_\sigma e^{-\beta H(\sigma)}$ of $H$ reproduces the partition function $Z_{H'}(\beta) = \sum_{\sigma'} e^{-\beta H'(\sigma')}$ of $H'$ up to rescaling, to within exponentially small additive error: $Z_{H'}(\beta) = \gamma Z_{H}(\beta) + O(e^{-\Delta})$ for some known constant $\gamma$. (Here, $\beta$ is inverse-temperature).
    \label[part]{part:partition_function}
  \end{enumerate}
\end{definition}

We will refer to the set of spins $P:=\cup_{i}P_{i}$ as the ``physical spins''.
Note that the cardinality of $P$ is fixed (it depends on number of spins in $H'$), and, in particular, is independent of $\Delta$.
Note also that, given $H'$, $H$ is fixed, hence the number of terms of both $Z_H(\beta)$ and $Z_{H'}(\beta)$ is constant. The $O(e^{-\Delta})$ term in the partition function approximation thus indicates that the approximation error is at most some constant times  $e^{-\Delta}$, where this constant will generally depend on the number of terms.
This approximation error can be made arbitrarily small at the expense of increasing the strengths of some of the couplings to arbitrarily large values. (This is also the case for the earlier completeness results, see \cref{sec:discrete}).

\begin{definition}[Universal Model]
  \label{def:universal}
  We say that a model is \emph{universal} if, for any Hamiltonian $H' = \sum_{I=1}^m h_I$ on $n$ spins composed of $m$ separate $k$-body terms $h_I$, $H'$ can be simulated by some Hamiltonian $H$ from the model specified by $\poly(m,2^k)$ parameters and acting on $\poly(n,m,2^k)$ spins.
\end{definition}

Note that we demand that the overhead incurred by simulating a target Hamiltonian using a universal model scales polynomially in the number of parameters required to describe the target Hamiltonian.
A general target Hamiltonian $H'$ on $n$ spins can involve $n$-body interactions, requiring $2^n$ parameters to specify.
In this case, any Hamiltonian that simulates it will necessarily have to contain at least $2^n$ parameters.
On the other hand, if the target Hamiltonian is made up of $m$ separate $k$-body terms, it can be described by $m 2^k$ parameters.
In this case, our definition of universality requires that the universal model should only contain $\poly(m,2^k)$ parameters.
In other words, the simulation overhead should be polynomial in all cases.

Note also that we allow for the partition function to be rescaled by a known constant $\gamma$, since $H$ has generally more degrees of freedom than $H'$, so that its partition function may scale differently.
For $\Delta=\infty$, this simply amounts to a global energy shift.

Up to now, we have placed no restriction whatsoever on what a spin model can look like; it could consist of an arbitrary collection of completely unrelated Hamiltonians.
However, in any reasonable spin model we expect there to be some relationship between different Hamiltonians within the same model.
The following definition imposes some additional structure on spin models, such that different Hamiltonians in the same model are at least loosely related.

\begin{definition}[Closed Model]\label{def:closure}
  We say that a spin model is \emph{closed} if, for any pair of Hamiltonians $H^{(1)}_{A}$ and $H^{(2)}_{B}$ in the model acting on arbitrary sets of spins $A$ and $B$ respectively, possibly with $A\cap B\neq \emptyset$ there exists a Hamiltonian in the model which simulates \mbox{$H^{(1)}_{A} + H^{(2)}_{B}$}.
\end{definition}

If the model places no constraints on the interaction graph (e.g.\ it consists of arbitrary $k$-body terms, with no restrictions on which sets of $k$ spins can interact directly, and allows arbitrary inhomogeneous coupling strengths), then it is trivially closed: the sum of two such Hamiltonians is itself another Hamiltonian from the model.
Closure becomes a non-trivial property for spin models in which the form of the interaction graph is restricted in some way (e.g.\ to a planar graph, or to a square lattice).

We will make use of the concept of reductions between decision problems from complexity theory. We first recall the definitions of some of the most famous computational problems in complexity theory, which we will make key use of:

\begin{definition}[\textsc{SAT}]
  \label{def:SAT}
  Given a Boolean expression $\phi$, is it satisfiable?
\end{definition}

\begin{definition}[\textsc{3SAT}]
  \label{def:3SAT}
  Given a Boolean expression $\phi$ in 3 conjunctive normal form, is it satisfiable?
\end{definition}
Since every Boolean expression can be expressed in 3CNF, every \textsc{SAT} problem can be rephrased as a \textsc{3SAT} problem.

The basic computational problem corresponding to the task of finding ground state energies of spin models is called the \textsc{Ground State Energy} problem (\textsc{GSE}):
\begin{definition}[\textsc{Ground State Energy}]
  The ground state energy problem of a model $\mathcal{M}=\{H_\alpha\}_{\alpha}$, asks: given $H_{\alpha}\in \mathcal{M}$, is there a configuration $\sigma$ such that $H_{\alpha} (\sigma) \leq 0$?
\end{definition}
A configuration $\sigma$ that minimizes $H_{\alpha}(\sigma)$ is called a ground state configuration.

A \emph{polynomial-time reduction} from \textsc{SAT} to \textsc{GSE} is a map $f$ from boolean formulae to Hamiltonians, which can be computed in time polynomial in the size of the formula, such that a formula $\phi$ is satisfiable if and only if $f(\phi)$ is a yes-instance of \textsc{GSE} (i.e.~there is a configuration $\sigma$ such that $H_{\alpha}(\sigma)\leq 0$).
We will need a form of polynomial-time reduction between \textsc{SAT} and \textsc{GSE} that additionally preserves the structure of the witnesses.
(We could use another NP-complete problem in the definition instead of \textsc{SAT}, or even define faithful reductions directly in terms of witnesses.
But this would just make the identification between witnesses and spins in the definition more complicated, for little gain.)

\begin{definition}[Faithful reduction]
  \label{def:faithful}
  We say that a reduction from \textsc{SAT} to \textsc{GSE} is \emph{faithful} if there exists a one-to-one mapping $F$ between subsets of spins $R_i$ and Boolean variables $x_i$, such that the ground state configuration $\sigma$ corresponds to a satisfying assignment $x_i = F(\sigma_{R_i})$ of the \textsc{SAT} problem.
   We denote the set of spins that correspond to Boolean variables by $R :=\cup_{i}R_{i}$.
\end{definition}

\begin{remark}[On the relation between faithful reductions and FNP]
  The concept of faithful reduction is related to that of a reduction between the function versions of \textsc{SAT} and \textsc{GSE}.
  \textsc{SAT} and \textsc{GSE} are \emph{decision problems}, in which we are asked whether a solution exists.
  In function problems, we are additionally asked to \emph{provide} the solution, if it exists.
  For example, in the function version of \textsc{SAT}, called \textsc{FSAT}, we are given a Boolean expression $\phi$, as in \textsc{SAT}.
  But now, if $\phi$ is satisfiable, we must return a satisfying assignment for $\phi$; otherwise we must return ``no''.
  Similarly, in \textsc{FGSE}, the function version of \textsc{GSE}, we are given a Hamiltonian $H$, and if there exists a spin configuration $\sigma$ so that $H(\sigma)\leq 0$, we must provide $\sigma$; otherwise we must return ``no''.

  Informally, NP is the class of decision problems such that, if the answer is ``yes'', then there is a proof of this fact (the `witness'), of length polynomial in the size of the input, that can be verified in polynomial time; if the answer is ``no'', then the algorithm must reject all purported proofs.
  FNP, the function version of NP, is the class of function problems such that, if the answer is ``yes'', then the algorithm must return a proof of this that can be verified in polynomial time; if the answer is ``no'', then the algorithm must reject all proofs.
  \textsc{FSAT} is known to be FNP-complete \cite{Pa95,Mo11}.

  Now, a reduction between function problems \textsc{A} and \textsc{B} is a polynomial map $g$ such that $x$ is a yes-instance of \textsc{A} if and only if $g(x)$ is a yes-instance of \textsc{B}, and another polynomial map $G$ such that $w$ is the witness of the fact that $x$ is a yes-instance of \textsc{A} if and only if $G(w)$ is a witness of the fact that $g(x)$ is a yes-instance of \textsc{B}.
  That is, there is an additional polynomial map, $G$, that maps the witness of one problem to a witness of the other.

  Our faithful reduction is thus a reduction between \textsc{FSAT} and \textsc{FGSE} in which $G$ is a particularly simply map, as it merely identifies part of the witness of the \textsc{FGSE} problem (namely the ground state configuration $\sigma_R$ of a subset $R$ of the spins) with the witness of the \textsc{FSAT} problem.
\end{remark}

\section{2D Ising Model with Fields}\label{sec:Ising}
The 2D Ising with fields is an important example of a model that is closed and whose \textsc{GSE} admits a faithful reduction from \textsc{SAT}.
To see this, we first need to recall some well-known computational problems.
Given a graph $G=(V,E)$, a \emph{vertex cover} is a a subset $C\subseteq V$ such that, for each edge $(u,v)\in E$, at least one of $u$ and $v$ belongs to $C$ \cite{Di01b}.
The cardinality of $C$, $|C|$, is also called the \emph{size} of $C$.
Trivially, every graph $G=(V,E)$ has a vertex cover of size $|V|$.

\begin{definition}[\textsc{Vertex Cover}]
  \label{def:VertexCover}
  Given a graph $G=(V,E)$ and an integer $K$, does $G$ have a vertex cover of size $K$ or less?
\end{definition}

A graph $G$ is \emph{planar} if it can be drawn in the plane without any two edges crossing.

\begin{definition}[\textsc{Planar Vertex Cover}]
  \label{def:Planar VertexCover}
  Given a planar graph $G$ and an integer $K$, does $G$ have a vertex cover of size $K$ or less?
\end{definition}

An \emph{independent set} in a graph $G=(V,E)$ is a subset $I\subseteq V$ such that, for all $u,v\in I$ the edge $(u,v)$ is \emph{not} in $I$.
It is easy to see that $C\subseteq V$ is a vertex cover of $G=(V,E)$ iff $V-C$ is an independent set of $G$ \cite{Mo11}.
Similarly, $|I|$ is called the size of the independent set $I$ \cite{Ga79}.

\begin{definition}[\textsc{Independent Set}]
  \label{def:IndSet}
  Given a graph $G=(V,E)$ and an integer $J$, does $G$ have an independent set of size $J$ or more?
\end{definition}

We are now ready to show that the 2D Ising model with fields fulfils our two properties: closure and faithful \textsc{SAT} reduction.
Many different ways of reducing \textsc{SAT} to Ising are known, but not all of these are faithful.
(I.e.\ spin configurations in the ground state do not necessarily correspond to values of the boolean variables for which the formula evaluates to \texttt{true}.)
We perform the reduction in two steps: \textsc{SAT} to \textsc{Vertex Cover}, then \textsc{Vertex Cover} to \textsc{GSE} for the 2D Ising model.
The first step uses a standard construction from Garey and Johnson \cite{Ga79}, the second uses the classic construction of Barahona \cite{Bar82}.
What we need to additionally verify is that each step on these constructions is faithful.

\begin{lemma}
  \label{lem:2DIsing}
  The 2D Ising with fields (\cref{def:2DIsing}) is closed and its \textsc{GSE} admits a polynomial-time faithful reduction from \textsc{SAT}.
\end{lemma}
\begin{proof}
  We first show that the \textsc{GSE} of the 2D Ising with fields admits a
  polynomial-time faithful reduction from \textsc{SAT}, via the following sequence of transformations:
  \begin{enumerate}
  \item
    \emph{Express the \textsc{SAT} formula in 3 Conjunctive Normal Form (3CNF)}.
    \label[step]{3CNF}
    Every Boolean formula can be written in 3CNF, so the \textsc{SAT} problem is translated into a \textsc{3SAT} problem.
    Rewriting a formula in 3CNF gives a boolean formula over the same set of variables with the same satisfying assignments, so this step is manifestly faithful.

  \item
    \emph{Reduce \textsc{3SAT} to \textsc{Vertex Cover}}.
    \label[step]{im:3SAT-VC}
    The construction and argument follows \cite{Ga79} (page 55).
    Given a formula $\phi$ in 3CNF, construct a graph $G = (V,E)$ as follows.
    Let $U=\{u_1,\ldots, u_n\}$ denote the set of variables and $C=\{c_1,\ldots,c_m\}$ the set of clauses of $\phi$.
    Denote the three variables or their negations involved in clause $c_j$ by $x_j,y_j,z_j$.

    The graph $G$ has $2n+3m$ vertices $V = V_U\cup V_C$, labelled $V_U = \{u_i,\bar{u}_i\}_{i=1,\dots,n}$ and $V_C =\{a_x[j],a_y[j],a_z[j]\}_{j=1,\dots,m}$.
    That is, there are two vertices $u_i$, $\bar{u}_i$ for each boolean variable ($\bar{u}_i$ will be associated with the negation $\neg u_i$ of the corresponding boolean variable), and three vertices $a_x[j],a_y[j],a_z[j]$ for each clause $c_j$.

    Each pair of vertices $u_i$ and $\bar{u}_i$ is joined by an edge: $E_U = \{(u_i,\bar{u}_i)\}_{i=1,\dots,n}$.
    Each triple of vertices  $a_x[j],a_y[j],a_z[j]$ is joined by a triangle of edges:
    \begin{equation}
    E_\Delta = \left\{(a_x[j],a_y[j]), (a_y[j],a_z[j]),(a_z[j],a_x[j])\right\}_{j=1,\dots,m}.
    \end{equation}
    For each clause $c_j$ involving variables $x_j,y_j,z_j$, every vertex in the triangle corresponding to that clause is joined by an edge to a vertex corresponding to one of the variables: $E_C = \{(a_x[j],x_j), (a_y[j],y_j), (a_z[j],z_j)\}_{j=1,\dots,n}$.
    (If a variable $x_j$ is negated in the clause, the edge joins the triangle to $\bar{x}_j$ rather than $x_j$.)
    The complete edge set of the graph is then $E = E_U\cup E_\Delta\cup E_C$.
    (\Cref{fig:3SAT-INDSET} shows an example of the resulting graph for the boolean formula given in \cref{eq:phi1k2}.)

    Now, $G$ cannot have an independent set of size greater than $n+m$, since the set can include at most one vertex from each pair $u_i,\bar{u}_i$ and at most one vertex from each triple $a_x[j],a_y[j],a_z[j]$.
    Hence the minimum size of a vertex cover is $|V|-(n+m)=n+2m$ (see \cref{def:IndSet}).
    For an arbitrary assignment to the boolean variables, consider the vertex set $F$ consisting of all the vertices labelled by variables that are \texttt{false} in the assignment (i.e.\ the set includes vertex $u_i$ if that variable is \texttt{false}, or $\bar{u}_i$ if $u_i$ is \texttt{true}).
    $F$ is manifestly an independent set, but is not necessarily maximal.
    If clause $c_j$ is satisfied by the assignment, then at least one variable appearing in $c_j$ ($x_j$, say, without loss of generality) is \texttt{true}, hence does not appear in $F$.
    We can therefore extend the independent set $F$ by adding the triangle vertex $a_x[j]$ to the set.
    (Note that we cannot extend the independent set with any further vertices from the same triangle.)
    On the other hand, if clause $c_j$ is not satisfied, then all variables appearing in $c_j$ are \texttt{false}, all corresponding vertices appear in $F$, and none of the triangle vertices can be added to the independent set.

    If the assignment satisfies $\phi$, then all clauses are satisfied and we obtain in this way an independent set of size $n+m$ which matches the upper bound, hence is maximal.
    If $\phi$ is not satisfiable, then every assignment fails to satisfy at least one clause, and the resulting independent set has size $<n+m$.
    It remains to show that there is no other larger independent set.
    An independent set of size $n+m$ necessarily contains one vertex from each triangle, hence contains $n$ vertices from $V_U$.
    Since this set is independent, it must contain exactly one vertex from each $u_i,\bar{u}_i$ pair, hence can be consistently identified with an assignment to the boolean variables.
    But we have already seen that no such independent set exists if $\phi$ is unsatisfiable.

    The above argument also shows that the reduction is faithful with $R = \{u_i\}_{i=1,\dots,n}$, since vertices in $R$ that are in the vertex cover (resp.\ independent set) are in one-to-one correspondence with \texttt{true} (resp.\ \texttt{false}) variables in the satisfying assignment.

  \item
    \emph{Reduction from \textsc{Vertex Cover} to \textsc{Planar Vertex Cover}}.
    \label[step]{im:planarvc}
    This step follows \cite{Ga76}.
    Take an arbitrary projection of $G$ onto the plane, resulting in a non-planar graph with, say, $t$ crossings.\footnote{Note that determining the projection that gives rise to the minimum number of crossings is itself an NP-hard problem \cite{Ga83}, but we do not need to minimize $t$ here.}
    Replace every crossing by a ``crossing gadget'' \cite{Ga76}, introducing 22 vertices and 40 edges per crossing, as shown in \cref{fig:3SAT-INDSET}.
    This defines a new graph $\tilde{G}=(\tilde{V},\tilde{E})$, which is planar.
    Following \cite{Ga76}, we will call the vertices at the four corners of a crossing gadget \emph{outlet vertices}, or simply \emph{outlets}.

    Garey, Johnson and Stockmeyer prove in \cite{Ga76} that $G$ has a vertex cover of size $p$ iff $\tilde{G}$ has a vertex cover of size $p+13t$ \cite{Ga76}.
    More precisely, they prove that any vertex cover of $G$ can be extended to a vertex cover of $\tilde{G}$ by adding 13 vertices from crossing gadget.
    And, conversely, $\tilde{G}$ always has a minimal vertex cover such that restricting $\tilde{G}$ to the vertices of $G$ gives a vertex cover for $G$.
    (Note that there are no minimal vertex covers of $\tilde{G}$ that contain two opposite outlets from the same crossing gadget (see Table~1 in \cite{Ga76}).)

    We claim that any minimal vertex cover of $\tilde{G}$ that includes exactly one vertex from each pair of opposite outlets (see \cref{fig:3SAT-INDSET}) restricts to a minimal vertex cover of $G$.
    To see this, we need to show that if in the restriction to $G$, any edges in $G$ that were replaced by crossing gadgets in $\tilde{G}$ are covered.
    Note that the vertex cover of $\tilde{G}$ must contain all the vertices of $G$ that are attached to outlet vertices \emph{not} in the cover (otherwise the edge joining that vertex to the crossing gadget outlet would not be covered).
    Thus at least one of the vertices on any edge that was replaced by a crossing gadget is included in the cover, as required.

    We have shown that every minimal vertex cover of $G$ extends to a minimal vertex cover of $\tilde{G}$ contianing exactly one vertex from each outlet pair, and conversely any minimal vertex cover of $\tilde{G}$ with this property restricts to a vertex cover of $G$.
    A vertex cover of $\tilde{G}$ which extends a vertex cover of $G$ clearly preserves faithfulness, since the set $R$ only contains vertices from the original graph $G$.
    \cite{Ga76} does not prove that all minimal vertex covers have this property.\footnote{Though it is conceivable that this is true.}
    However, for our purposes, there is a simpler way of ensuring faithfulness in the next step, which is equivalent to simply imposing at this stage that we must choose a minimal vertex cover that contains exactly one vertex from each pair of opposite outlets.

    \begin{figure}[thb]
      \centering \includegraphics[width=0.9\textwidth]{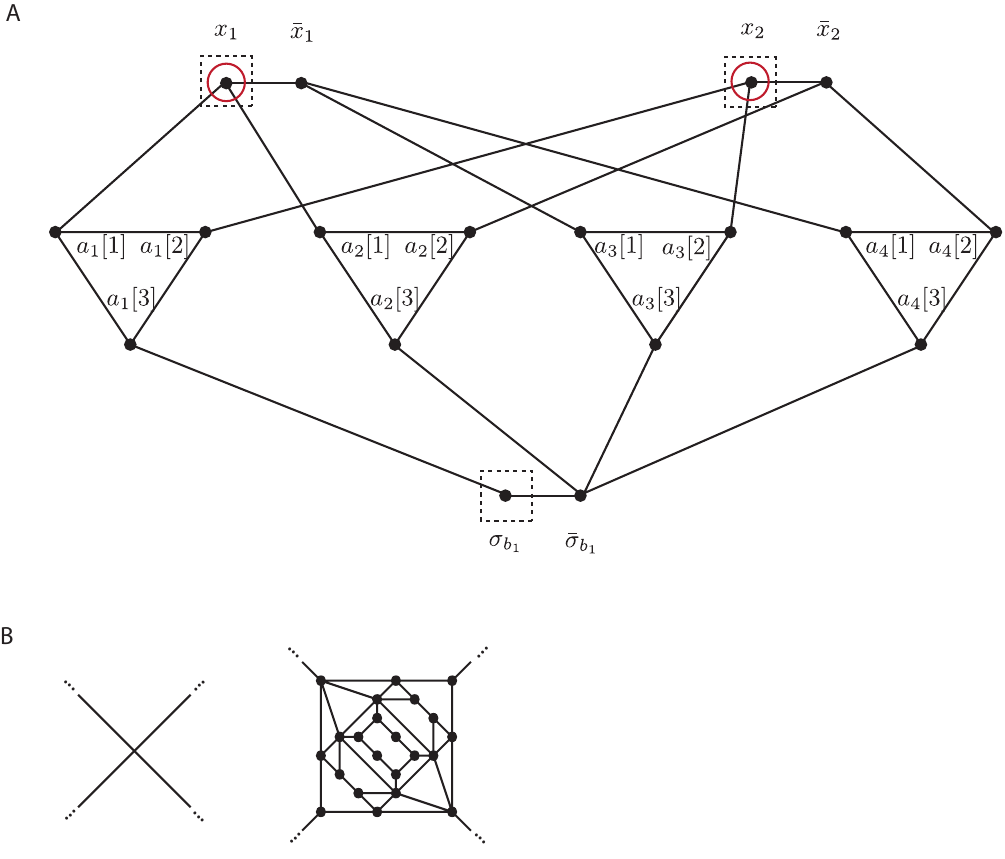}
      \caption{A. The Boolean formula $\phi_1$ in \cref{eq:phi1k2} is satisfiable iff the graph shown here has a vertex cover of size 11 (see the proof of \cref{lem:2DIsing}).
        The variables belonging to $R$ (see \cref{def:faithful}) are marked with dashed squares, and those in the physical set $P$ (see \cref{def:simulation}) are marked with red circles.
        B. The crossing gadget to reduce \textsc{Vertex Cover} to \textsc{Planar Vertex Cover} (\cref{im:planarvc} in the proof of \cref{lem:2DIsing}) \cite{Ga76}.
      }
      \label{fig:3SAT-INDSET}
    \end{figure}

  \item
    \emph{Reduction from \textsc{Planar Vertex Cover} to \textsc{GSE} of Planar Ising with fields}.
    \label[step]{im:PVC-Ising}
    This step in the reduction follows Barahona \cite{Bar82}.
    Given a planar graph $\tilde{G}=(\tilde{V},\tilde{E})$, associate one Ising variable $\sigma_i\in \{-1,1\}$ to each vertex $i\in \tilde{V}$ and define the Ising Hamiltonian
    \begin{equation}
      \label{eq:HGising}
      h_{\tilde{G}}(\{\sigma_i\}) = \frac{1}{2} \sum_{i\in \tilde{V}}\sigma_{i} (1- \textrm{deg}(\sigma_i)) + \frac{1}{2}\sum_{(i,j)\in \tilde{E}} \sigma_i \sigma_j -\frac{|\tilde{V}|}{2} + \frac{|\tilde{E}|}{2} + K,
    \end{equation}
    where $\textrm{deg}(\sigma_i)$ is the degree (i.e.\ number of neighbours) of the vertex associated with $\sigma_i$.
    Since the interaction graph of $h$ is $\tilde{G}$, this is obviously a planar Ising model.
    We claim that $h_{\tilde{G}}$ has a ground state of energy 0 iff $\tilde{G}$ has a vertex cover of size $|\tilde{V}|-K$.

    To see this, define the Boolean variables $s_i\in \{0,1\}$, which are related to the Ising variables by $\sigma_i=1-2s_i$. Then, in terms of this change of variables,
    \begin{equation}
      \label{eq:pseudoising}
      h_{\tilde{G}}(\{s_i\}) = - \sum_{i\in \tilde{V}} s_i + 2 \sum_{(i,j)\in \tilde{E}} s_is_j + K.
    \end{equation}
    It is easy to see that an independent set of size $K$ for $\tilde{G}$ gives a configuration with energy $0$, by setting the $s_i=1$ for vertices that are in the independent set, and setting all other $s_i=0$.
    Moreover, the ground state corresponds in this way to a maximal independent set, since if a set contains two adjacent vertices (thus being an invalid independent set), the energy can be decreased by removing one of them from the set (due to the factor of~2 multiplying the interaction term).

    Note that this reduction is faithful, since vertices in the vertex cover are in one-to-one correspondence with boolean variables $s_i$ taking the value~0, or equivalently, spins $\sigma_i=1$ in the ground state.
    See \cref{tab:witness-ising} for a summary of the correspondences of the witnesses.

    \begin{center}
      \begin{table}[htb]\centering
        \begin{tabular}{c|c|c|c|c}
        \textsc{3SAT} & Vertex Cover $C$ & Ind.~Set $I$ &
        \textsc{GSE} of Ising (\cref{eq:pseudoising})&
        \textsc{GSE} of Ising (\cref{eq:HGising})\\
        \hline
        $x_i=1$           & $i \in C$ & $i \notin I$ &  $s_{i}=0$ & $\sigma_{i}=1$\\
        $x_i=0$           & $i \notin C$ & $i \in I$ &  $s_{i}=1$ & $\sigma_{i}=-1$\\
        \end{tabular}
        \caption{Summary of the identifications of the witnesses in the computational problems
          \textsc{SAT} (where the witness is given by an assignment to the boolean variables $\{x_{i}\}$),
          \textsc{Vertex Cover} (where the witness is the vertex cover $C$ itself),
          \textsc{Independent Set} (where the witness is the independent set $I$ itself; $C$ is a vertex cover     iff $V\backslash I$ is an independent set),
          \textsc{GSE}  of Ising model depending on 0/1 spins (\cref{eq:pseudoising}),
          and \textsc{GSE} of an actual Ising model (\cref{eq:HGising}).
          In summary, a Boolean variable $x_{i}=1$  iff the corresponding Ising variable $\sigma_{i}=1$.
        }
      \end{table}
      \label{tab:witness-ising}
    \end{center}

    It remains to enforce the additional constraint we imposed in \cref{im:planarvc}: that the minimal vertex cover must be one that contains exactly one vertex from each pair of opposite outlet vertices.
    To enforce this, we first multiply the entire Hamiltonian constructed thus far by a large constant factor, $100$ say.\footnote{This is certainly larger than necessary, but it makes the argument clear.}
    We then add local field terms $-\sigma_i$ to all outlet vertices $\tilde{V}_{o}$:
    \begin{align}
      &H_{\tilde{G}} = 100 \: h_{\tilde{G}} + h_{\textrm{outlets}},  \\
      &h_{\textrm{outlets}} = -\sum_{i\in \tilde{V}_{o}} \sigma_{i} .
    \end{align}
    In this way, vertex covers that contain more outlet vertices have lower energy, but the ground state must still be a minimal vertex cover.
    The latter follows from the fact that every additional vertex in the cover increases the energy contribution from the original Hamiltonian by~100.
    Whereas adding an outlet vertex to the cover only reduces the energy contribution of $h_{\textrm{outlets}}$ by~1.
    Since vertex covers that include more than one vertex from an outlet pair are never minimal (see \cref{im:PVC-Ising}), the ground states must correspond to minimal vertex covers containing exactly one vertex from each outlet pair.
    Finally, since all these minimal vertex covers contain the same number of outlet vertices, all such states have the same energy and are ground states.

  \item
    \emph{Reduction from \textsc{GSE} of Planar Ising with fields to \textsc{GSE} of 2D Ising with fields}.
    \label[step]{IsingG->2D}
    Since $\tilde{G}$ is a planar graph, it can be embedded as a minor of a 2D square lattice $F$ with only a polynomial overhead in the number of spins \cite{de88,de90,Sc90}.
    The effects of the edge contraction and deletion operations are easily implemented in the Ising model: contraction of edge $(i,j)$ by setting $J_{ij}> \Delta$, and $h_{i}=0$ (or $h_{j}=0$),
    and deletion of edge $(i,j)$ by setting $J_{ij}=0$ (see \cite{De09a}).
    These operations  preserve faithfulness, as the set of spins $R$ will be left unchanged.
  \end{enumerate}

  It remains to prove that the 2D Ising model with fields is closed.
  Clearly, the Ising model with fields (\cref{def:Isingmodel}) is closed, since for any two graphs $G_1$, $G_2$, their union $G_1\cup G_2=:G$ is another graph $G$ with another set of couplings $\mathcal{J}=\{\mathcal{J}_1,\mathcal{J}_2\}$, which is another element of the family.
  To see that the 2D Ising model with fields is also closed, consider two 2D square lattices $F_1$, $F_2$, possibly with $F_1\cap F_2\neq \emptyset$, which implies that generally $F_1 \cup F_2 =: G'$ is not a 2D square lattice.
  However, there will be another 2D square lattice $F_3$ which will be able to simulate $G'$, as $G'$ can be projected to the plane, each crossing replaced by a crossing gadget as in \cref{im:planarvc}, and the resulting planar graph can be obtained as a minor of $F_3$, as in \cref{IsingG->2D}.
\end{proof}

\section{Universal Hamiltonian Simulation -- discrete spins }\label{sec:discrete}
We are now in a position to prove our main result: that the closure and faithful \textsc{SAT} reduction conditions are necessary and sufficient for a model to be universal. In this section, we state and prove the result for spin models with discrete degrees of freedom ($q$-level Ising spins). \Cref{sec:continuous} extends the result to models with continuous spin degrees of freedom.

\begin{theorem}[Main result -- discrete case]
  \label{thm:universal_characterisation}
  A spin model is universal if and only if it is closed and its \textsc{Ground State Energy} problem admits a polynomial-time faithful reduction from \textsc{SAT}.

  If the faithful reduction from a \textsc{SAT} problem on $n$ spins requires at most $p(n)$ spins, and the universal model is trivially closed, then simulating the energy levels and configurations (\cref{def:simulation}, \cref{part:eigenvalues,part:eigenstates}) of a Hamiltonian on $n$ $q$-level spins with $m$ $k$-body interactions requires at most $m\, q \, 2^{k+1} \, p\left(k\,\lceil\log_2q\rceil+1\right)$ spins and $3\, m\, q\, 2^{k+1}$ interaction terms.
  If closure is non-trivial, the overhead is a polynomial $g\left(m\, q\, 2^{k+1}\, p\left( k\, \lceil\log_2 q \rceil + 1\right) , \, 3\, m\, q\, 2^{k+1}\right)$ in both these quantities, where $g$ depends on the particular universal model.

  Approximating the partition function (\cref{def:simulation}, \cref{part:partition_function}) increases the overhead to at most $m\, q\, 2^{k+1} \left[ \left( q \, 2^{k+1}-1 \right) \, p \left( \lceil \log_2 q \rceil k+1\right) + p\left(1\right) + p\left(p\left(1\right)\right)\right]$ spins and $m\, q\, 2^{k+1} \left[ q\, 2^{k+1}+1\right]$ terms.
  If closure is non-trivial, the overhead is the same model-dependent polynomial $g$ in both these quantities.
\end{theorem}

Note that in a fully general result such as this, it is not meaningful to quantify the reduction in terms of the number of literals and/or clauses of the SAT problem, since this depends on the particular decomposition of the SAT formula into clauses, and the most efficient decomposition will depend on the the particular choice of universal Hamiltonian.
It is also not meaningful to attempt to quantify the closure overhead more precisely than the polynomial function $g$, since this is again Hamiltonian-dependent, whereas the theorem asserts a general result for arbitrary Hamiltonians.
Applying our result to any particular universal model and computing the overhead is a matter of determining the polynomials $p$ and $g$ for that specific model, and plugging them into the theorem.

Note also that NP-completeness of the ground state energy problem does not on its own imply universality. It is true that a spin model whose \textsc{GSE} problem is NP-complete can encode the ground state energy problem of any other spin model, since the latter is in NP. However, universality requires substantially more: we must reproduce the \emph{entire} energy spectrum, not just the ground state energy. Furthermore, we must also reproduce the corresponding spin configurations and partition function.

It is clear that we need some additional structure to the reduction, beyond that required by NP-hardness.
One way to emphasise the difference between a complexity-theoretic polynomial-time reduction and a faithful reduction is to consider what would happen if P$=$NP (which is possible, as far as we know!).
If P$=$NP, then the \textsc{GSE} problem for essentially any Hamiltonian becomes NP-complete, for trivial reasons: we can simply solve the \textsc{SAT} problem instance as part of the polynomial-time reduction (remember we are assuming P$=$NP here!)
to determine whether it is satisfiable or not.
Once we know what the right answer is, we simply construct the trivial constant Hamiltonian $H = 0$ or $H = 1$, depending on whether the \textsc{SAT} instance is satisfiable or not.
If P$=$NP, this is a valid polynomial-time reduction from any \textsc{SAT} instance to the \textsc{GSE} of an extremely trivial Hamiltonian.
But this trivial Hamiltonian clearly cannot be a universal simulator in the sense we require.
This absurd example illustrates that for physical Hamiltonian simulation (as opposed to merely computing ground state energies), simple computational reductions do not capture enough of the structure of the problem.
By proving necessary and sufficient conditions for universal simulation, \cref{thm:universal_characterisation} shows precisely what additional structure is required.

\begin{proof}[of \cref{thm:universal_characterisation}]
  The ``only if'' direction is immediate.
  The closure part follows immediately from the definition of universality (\cref{def:universal}): the model must be able to simulate \emph{any} $H^{(1)}_{A}$, $H^{(2)}_{B}$ and $H = H^{(1)}_{A} + H^{(2)}_{B}$, including any $H^{(1)}_{A}$ and $H^{(2)}_{B}$ that are in the universal model itself.
  The faithful reduction part follows immediately from the fact that the 2D Ising model with fields admits a faithful reduction from \textsc{SAT} (\cref{lem:2DIsing}): one Hamiltonian that must be simulatable by the universal model is the 2D Ising model with fields.

  To prove the ``if'' direction, it will be sufficient to show that we can simulate a single $k$-body Hamiltonian term.
  We will then apply this to each term in the Hamiltonian separately, and use closure of the model to assemble the resulting simulations of the local terms into a simulation of the full Hamiltonian.
  We therefore focus initially on a single $k$-body term $H'$.
  We will first prove that the model satisfies \cref{part:eigenvalues,part:eigenstates} of \cref{def:simulation}, before extending the construction to prove \cref{part:partition_function}.

  To achieve the required separation of energy scales (cf.\ \cref{def:simulation}), we will use the reduction from \textsc{SAT} three times, and combine the resulting Hamiltonians (see \cref{fig:proof-completeness-spins}).
  Each use of the reduction will in general introduce new auxiliary spins (see \cref{def:faithful}).
  We assume without loss of generality that the reduction gives a Hamiltonian whose ground state energy is~0, and all excited states have energy at least~1, as this can always be achieved by shifting and rescaling the energy.

  \begin{figure}[htbp]
    \centering \includegraphics[width=0.6\textwidth]{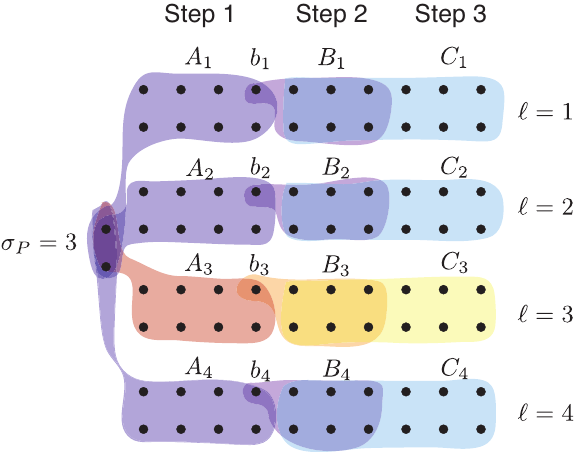}
    \caption{The proof of \cref{thm:universal_characterisation} for the case $k=2$, here for $\sigma_P=3$.
      Step 1 introduces auxiliary spins $A_\ell$ and $b_\ell$, and Steps 2 and 3, $B_\ell$ and $C_\ell$, respectively, for all $\ell$.
      Note that we do not assume any structure in the interactions of the spins.}
    \label{fig:proof-completeness-spins}
  \end{figure}

  We first expand the target Hamiltonian in terms of the elementary basis for Boolean functions (see \cref{sec:prelims}), $H'(x) = \sum_{\ell=1}^{2^{k}} E_\ell e_\ell(x)$ where $E_\ell= H'(\ell)$.

  \paragraph{Step~1}
  \label{step1}
  We apply the reduction to the Boolean formula $\phi_\ell=e_{\ell}(\sigma_P)\xnor\sigma_{b_\ell}$, for $\ell=1,\ldots, 2^{k}$, where $b_\ell$ is a single auxiliary ``flag'' spin.
  This gives a Hamiltonian $H_1^\ell$ acting on spins $P\cup b_\ell\cup A_\ell$ such that
  \begin{equation}\label{eq:H1}
    \begin{cases}
      H_1^\ell(\sigma) = 0 & \sigma_{b_\ell} = e_\ell(\sigma_P), \ \sigma_{A_\ell} \in \Sigma_{A_\ell}^{b_\ell}\\
      H_1^\ell(\sigma) \geq 1 & \text{otherwise},
    \end{cases}
  \end{equation}
  for some non-empty sets of configurations $\Sigma_{A_\ell}^b$ of the auxiliary $A_\ell$ spins.
  That is, the ground state configurations of $H_1^\ell$ satisfy that $\sigma_{b_\ell}= e_\ell(\sigma_P)$, i.e.~the flag spin $\sigma_{b_\ell}$ correctly signals whether $\sigma_P= \ell$ or not.

  \paragraph{Step~2}
  We apply the reduction to the Boolean formula which forces the flag spin to be~0: $\sigma_{b_\ell}\xnor 0$.
  This gives a Hamiltonian $H_2^\ell$ acting on spins $b_\ell\cup B_\ell$ such that
  \begin{equation}\label{eq:H2}
    \begin{cases}
      H_2^\ell(\sigma) = 0 & \sigma_{b_\ell}=0, \ \sigma_{B_\ell}\in \Sigma_{B_{\ell}}\\
      H_2^\ell(\sigma) \geq 1 & \text{otherwise},
    \end{cases}
  \end{equation}
  where $\Sigma_{B_\ell}$ is some non-empty set of configurations of the auxiliary $B_\ell$ spins.
  In particular, choose an arbitrary configuration $\sigma^\star$ and let
  \begin{equation}
    \kappa_\ell := H_2^\ell(\sigma_{b_\ell}=1, \sigma_{B_\ell} = \sigma^\star) \geq 1\, .
  \end{equation}
  Note that computing $\kappa_\ell$ requires computing an energy level of the Hamiltonian $H_2^\ell$.
  Since $H_2^\ell$ is constructed by reduction from a formula involving only a single variable, this is a Hamiltonian on a \emph{constant} number of spins, so $\kappa_\ell$ can be computed in constant time.\footnote{Strictly speaking, in \cref{def:universal} we do not actually require the Hamiltonian to be \emph{constructable} in polynomial-time, only that it has polynomial overhead in terms of the number of spins and parameters. However, our proof also gives this stronger property.}

  \paragraph{Step~3}
  We apply the reduction to the Boolean formula which forces the configuration of $B_\ell$ to be $\sigma_{B_\ell}^\star$ if $\sigma_{b_{\ell}}=1$, and does nothing if $\sigma_{b_{\ell}}=0$:
  \begin{equation}
    \bigl((\sigma_{b_{\ell}}\xnor 1) \land (\sigma_{B_{\ell}}\xnor\sigma_{B_\ell^\star})\bigr) \lor (\sigma_{b_{\ell}}\xnor 0).
  \end{equation}
  This gives Hamiltonian $H_3^\ell$ acting on spins $B_\ell\cup C_\ell$ such that
  \begin{equation}\label{eq:H3}
    \begin{cases}
      H_3^\ell(\sigma) = 0 & \sigma_{b_\ell} = 1,\ \sigma_{B_\ell}=\sigma_{B_\ell}^\star,\ \sigma_{C_\ell} \in \Sigma_{C_\ell}^1\\
      H_3^\ell(\sigma) = 0 & \sigma_{b_\ell} = 0,\ \sigma_{C_\ell} \in \Sigma_{C_\ell}^0\\
      H_3^\ell(\sigma) \geq 1 & \text{otherwise},
    \end{cases}
  \end{equation}
  for some non-empty sets of configurations $\Sigma_{C_\ell}^b$ of the auxiliary $C_\ell$ spins.

  By rescaling these Hamiltonians and adding them together, we construct the following Hamiltonian (which can be simulated by a Hamiltonian within the same model, since the model is closed by assumption):
  \begin{equation}
    \begin{split}
      H(\sigma) = &\Delta \sum_{\ell} H_1^\ell(\sigma_P,\sigma_{b_\ell},\sigma_{A_\ell}) + \sum_{\ell} \frac{E_\ell}{\kappa_\ell} H_2^\ell(\sigma_{b_\ell},\sigma_{B_\ell})\\
      + &\Delta \sum_{\ell} H_3^\ell(\sigma_{B_\ell},\sigma_{C_{\ell}}).
    \end{split}
    \label{eq:Hfinal}
  \end{equation}
  Consider a configuration with
  \begin{align}
    \sigma_P &= x,\\
    \sigma_{A_\ell} &\in \Sigma_{A_{\ell}},\\
    \sigma_{b_\ell} &= e_\ell(x),\\
    \sigma_{B_\ell} &= \sigma_{B_\ell}^\star \text{ for } \ell\neq x,\\
    \sigma_{B_\ell} &\in \Sigma_{B_\ell} \text{ for }\ell=x,\\
    \sigma_{C_{\ell}} &\in \Sigma_{C_{\ell}}^{e_\ell(x)}.
  \end{align}
  (Here $x$ is any particular configuration of the physical spins).
  From \cref{eq:H1,eq:H2,eq:H3}, $H(\sigma)$ evaluated on such a configuration gives energy $E_x$.
  For all other configurations, $H(\sigma)\geq \Delta$.
  So $H$ simulates the low-lying energy levels and configurations of $H'$, as required (see \cref{fig:proof-completeness-step}).
  This proves that the model satisfies \cref{part:eigenvalues,part:eigenstates} of \cref{def:simulation}.

  \begin{figure}[htbp]
    \centering \includegraphics[width=.6\textwidth]{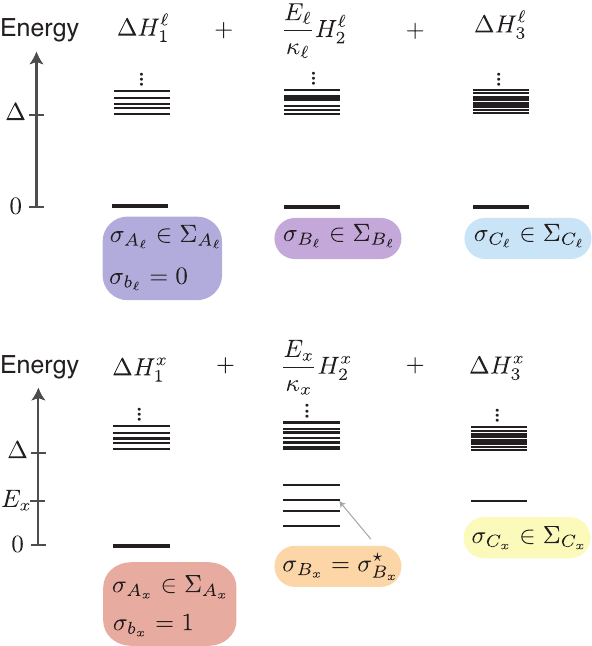}
    \caption{ The proof of \cref{thm:universal_characterisation}.
      The upper side corresponds to $\ell\neq x$, and the lower to $\ell=x$.
      The first, second and third column illustrate the spectrum of $H$ after adding the first, second and third term in \cref{eq:Hfinal}, respectively.
      Note that after the third step the spectrum of $H$ below $\Delta$ coincides with that of $H'$.
      The ground state (or low-lying) configurations are indicated at each step.
    }
    \label{fig:proof-completeness-step}
  \end{figure}

  \paragraph{Partition Function}
  It remains to prove \cref{part:partition_function} of \cref{def:simulation}.
  The previous construction of $H$ simulates the energy levels and corresponding configurations of the target Hamiltonian $H'$.
  However, it does not necessarily reproduce the correct partition function.
  Even though each energy level $E_k$ corresponds to a unique configuration $\sigma_P$ of spins $P$,\footnote{Even if an energy level is degenerate, each of those degenerate energy levels will correspond to a unique $\sigma_P$.} the construction introduces additional auxiliary spins at various steps.
  There may be multiple possible configurations of these auxiliary spins for the same energy $E_k$, and this degeneracy could be different for different energy levels, in which case $H$ will not correctly reproduce the partition function of $H'$.

  Thus, we must modify the construction to ensure that the auxiliary spins introduce the \emph{same} degeneracy for each energy level below the cut-off $\Delta$.
  The idea is to introduce additional auxiliary spins and additional terms in the Hamiltonian, which do not change the energy levels or configurations of the physical spins $P$, but which increase the degeneracy of each energy level in such a way that all the degeneracies are identical.

  \paragraph{Step~1'}
  We use the shorthand notation $\abs{A_\ell}\equiv\abs{\Sigma_{A_\ell}}$.
  From \cref{eq:H1}, the number of zero-energy configurations of $H_1$ with $\sigma_P=x$ is given by
  \begin{equation}
    \left| A_x^1\right|\: \prod_{\ell\neq x}\: \left| A_\ell^0\right| ,
  \end{equation}
  which could depend on $x$.
  By introducing $2^k$ copies of $P$ and symmetrising over all $2^k$ cyclic permutations of the terms $H_1^\ell$ that couple to the flag spins, we can increase this degeneracy to
  \begin{equation}
    \prod_{\ell'} \: \left| A_{\ell'}^1\right|\: \left| A_{\ell'}^0\right|^{2^k-1},
  \end{equation}
  which is independent of $x$.

  More precisely, we introduce $2^k-1$ auxiliary copies $P_{\ell'}$ of the physical spins $P$, and denote $P\equiv P_0$.
  Define the Hamiltonian $H_1^{\ell,\ell'} = H_1^{\ell-\ell'}$ (cf.\ \cref{eq:H1}) acting on spins $P_{\ell'}\cup b_{\ell}\cup A_{\ell,\ell'}$, where the difference $\ell-\ell'$ is taken modulo $2^k$.
  (So in particular $H_1^{\ell,0} = H_1^\ell$.)
  We similarly have $A_{\ell,\ell'} = A_{\ell-\ell'}$ (hence again  $A_{\ell,0} = A_{\ell}$), and $A_{\ell,\ell'}^{e_{\ell}(x)} = A_{\ell-\ell'}^{e_{\ell}(x)}$.
  Construct the overall Hamiltonian
  \begin{equation}
    H_1 = \sum_{\ell,\ell'=1}^{2^k} H_1^{\ell,\ell'}
  \end{equation}
  by combining all of these terms.
  Each flag spin $b_\ell$ is now coupled to $2^k$ copies of $P$, by all possible $2^k$ Hamiltonian terms from \cref{eq:H1} (see \cref{fig:partitionfunction}).

  \begin{figure}[htbp]
    \centering \includegraphics[width=0.85\textwidth]{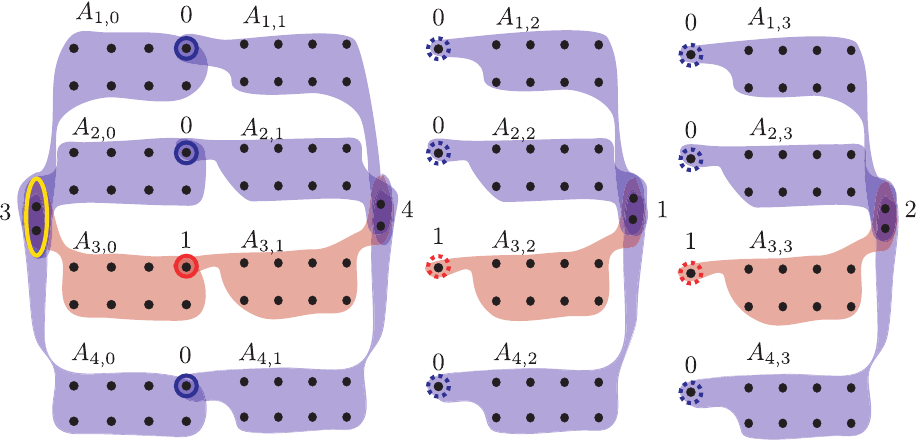}
    \caption{ The modified construction in the proof of \cref{thm:universal_characterisation} to reproduce the partition function.
      Here the configuration of the target Hamiltonian is $x=3$ (encircled in yellow).
      We reuse the flag spins $\sigma_{b_{\ell'}(\ell)}$ as $\sigma_{b_{\pi(\ell')}(\pi(\ell))}$ where $\pi$ is a cyclic permutation of $(1,2,3,4)$.
      The two spins encircled in dashed lines in each row are the same as those to their left.
      This induces a degeneracy of auxiliary spins of the first Hamiltonian which is constant for all $x$.
    }
    \label{fig:partitionfunction}
  \end{figure}

  The $H_1^{\ell,0} $ terms couple the original set of physical spins $P_0$ to the flag spins, by exactly the same Hamiltonians as in our previous construction.
  So, exactly as before, for a given configuration $\sigma_P=x$ of the physical spins, the flag spin $b_\ell$ is forced to take the value $\sigma_{b_\ell}=e_\ell(x)$ (\cref{eq:H1}), and there are $\left| A_{\ell,0}^{e_\ell(x)}\right|$ possible configurations of the auxiliary $A_{\ell,0}$ spins.

  We must show that this is consistent with the constraints imposed by all the new terms $H_1^{\ell,\ell'\neq 0}$.
  Consider the $P_{\ell'}$ copy of $P$.
  It is coupled to flag spin $b_x$ by $H_1^{x,\ell'}$.
  But $\sigma_{b_x}=1$, so by \cref{eq:H1} the only zero-energy configuration of $P_{\ell'}$ is $\sigma_{P_{\ell'}}=x-{\ell'}$.
  All other flag spins $\sigma_{b_{\ell\neq x}}=0$, and $\ell-\ell' \neq x-\ell'$ for $\ell\neq x$, so this configuration satisfies the constraints imposed by all other $H_1^{\ell,\ell'} $.
  Thus there exists a unique zero-energy configuration for each of the other $P_{\ell'}$ copies, with $\left| A_{\ell,\ell'}^{e_{\ell}(x)}\right| $ possible configurations of the corresponding auxiliary  spins $A_{\ell,\ell'}$.

  The configuration $\sigma_P$ of the physical spins uniquely determines $\sigma_{b_\ell}$ for all $\ell$ in any zero-energy configuration.
  Thus the zero-energy configurations of the $A_{\ell,\ell'}$ spins can be chosen independently of one another.
  Since the sets $A_{\ell,\ell'}$ are also completely disjoint, the total degeneracy of the zero-energy configurations with $\sigma_P=x$ is given by the product of all the individual degeneracies:
  \begin{equation}
    \prod_{\ell,\ell'}\: \left| A_{\ell,\ell'}^{e_\ell(x)}\right| \:
    = \: \prod_{\ell'} \: \left| A_{\ell'}^1\right| \: \left| A_{\ell'}^0\right|^{2^k-1},
  \end{equation}
  which is independent of the configuration $\sigma_P=x$ of the physical spins, as claimed.
  $H_1$ will be multiplied by $\Delta$, so all non-zero energy configurations of $H_1$ have energy $\geq\Delta$, and their degeneracies do not concern us.

  \paragraph{Steps~2' and~3'}
  We leave the Hamiltonians $H_2$ and $H_3$ from \cref{eq:H2,eq:H3} unchanged.
  Note that all the terms $H_2^\ell$ from \cref{eq:H2} are the same Hamiltonian, just acting on different subsets of spins $b_\ell\cup B_\ell$ (which implies that all $\kappa_\ell$ in fact take the same value $\kappa$).
  Similarly for the terms $H_3^\ell$ from \cref{eq:H3}, acting on subsets $b_\ell\cup C_\ell$.
  Furthermore, for given $\ell$ and $\ell'$, the terms $H_1^{\ell+N,\ell'+N} = H_1^{\ell,\ell'}$ constructed above are the same Hamiltonian for all $N$, just acting on different subsets of spins $P_{\ell'}\cup b_{\ell}\cup A_{{\ell},{\ell'}}$.
  Let $\pi_N$ be a cyclic permutation with shift $N$ of the $2^k$ blocks $P_{\ell}\cup b_{\ell}\cup A_{\ell}\cup B_{\ell}\cup C_{\ell}$.
  Applying $\pi_N$ to the overall Hamiltonian
  \begin{equation}\label{eq:H}
      H \;= \;\Delta\sum_{\ell,\ell'} H_1^{\ell,\ell'} + \sum_\ell\frac{E_\ell}{\kappa_\ell} H_2^\ell + \Delta\sum_\ell H_3^\ell
  \end{equation}
  transforms it into
  \begin{equation}\label{eq:H_permuted}
    \begin{split}
      \pi_NH \;=& \;\Delta\sum_{\ell,\ell'} H_1^{\ell+N,\ell'+N} + \sum_\ell\frac{E_\ell}{\kappa_\ell} H_2^{\ell+N} + \Delta\sum_\ell H_3^{\ell+N}\\
      \;= &\;\Delta\sum_{\ell,\ell'} H_1^{\ell,\ell'} + \sum_\ell\frac{E_{\ell+N}}{\kappa} H_2^\ell + \Delta\sum_\ell H_3^\ell.
      	\end{split}
  \end{equation}

  Now, from \cref{eq:H2,eq:H3}, we know that a low-energy configuration of $H$ with energy $E_t < \Delta$ must be a zero-energy configuration of $H_1$, $H_3$, and all of the $H_2^\ell$ terms except $H_2^t$.
  Thus if $\sigma$ is a low-energy configuration of $H$ with energy $E_t$, comparing \cref{eq:H,eq:H_permuted} we see that it must also be a low-energy configuration of $\pi_NH$ with energy $E_{t+N}$.
  But $H(\pi_N^{-1}(\sigma)) = \pi_NH(\sigma) = E_{t+N}$, which implies that $\pi_N^{-1}(\sigma)$ is a low-energy configuration of $H$ with energy $E_{t+N}$.
  Since $\pi_N$ is bijective, the configurations with energy $E_t$ are therefore in one-to-one correspondence with the configurations with energy $E_{t+N}$, so these energy levels have the same degeneracy.
  This holds for any cyclic permutation $\pi_N$, thus all energy levels have the same degeneracy, as required.

  We have shown that the Hamiltonian $H$ from our modified construction \cref{eq:H} reproduces the low-lying energy levels and configurations of any target Hamiltonian, and that the degeneracy $\gamma$ is the same for all low-lying energy levels $E_t < \Delta$.
  Thus, not only does it satisfy \cref{part:eigenvalues,part:eigenstates} of \cref{def:simulation}, as before, but also
  \begin{equation}
    Z_H(\beta) = \sum_\sigma e^{-\beta H(\sigma)}
    = \gamma\sum_{E_t<\Delta} e^{-\beta E_t} + \sum_{E_{t}\geq\Delta}e^{-\beta E_t}
    = \gamma Z_{H'} + O(e^{-\Delta}).
  \end{equation}
  Thus $H$ satifies \cref{part:partition_function}, too.
  Since $H$ is a sum of Hamiltonians $H_1^{\ell,\ell'}$, $H_2^\ell$ and $H_3^{\ell'}$ from the model, by the closure assumption (\cref{def:closure}) it can itself be simulated by a Hamiltonian from the model.

  So far, we have simulated a single $k$-body Hamiltonian $H'$, using a Hamiltonian $H$ on $n+O(2^k)$ spins with $2^k$ parameters $E_\ell$.
  To simulate an arbitrary  target Hamiltonian $H' = \sum_{I=1}^m h_I$ made up of $m$ separate $k$-body terms $h_I$, we simply use the above construction for each $h_I$ individually.
  The resulting Hamiltonian
  \begin{equation}
    H = \sum_{I=1}^m \sum_{\ell,\ell'=1}^{2^k} \left( \Delta H_{I,1}^{\ell,\ell'} + \frac{E_{I,\ell}}{\kappa_\ell} H_{I,2}^\ell + \Delta H_{I,3}^\ell\right)
  \end{equation}
  Again by closure, $H$ can be simulated by a Hamiltonian from the model.

  To complete the proof of \cref{thm:universal_characterisation}, we must account for the overhead incurred in the construction.
  Consider a target Hamiltonian $H'$ on $n$ $q$-level spins, with $m$ $k$-body terms.
  We can always re-express $H'$ as an equivalent model on 2-level spins by mapping the $q$ levels to the states of $\lceil \log_{2}q\rceil$ 2-level spins.
  This increases the interactions from $k$-body to $k'$-body, where $k'=k\, \lceil \log_{2}q\rceil$.

  In our construction of a Hamiltonian $H$ that reproduces the energy levels and configurations, for a fixed interaction term and fixed $\ell$, the overheads in Steps~1, 2 and~3 are $p(k'+1), p(1)$ and $p(p(1))$, respectively.
  Summing over all $\ell$ and all interaction terms, the number of variables is at most
  \begin{equation}
  m\, 2^{k'}\left[ p\left(k'+1\right) + p\left(1\right)+p\left(p\left(1\right)\right)\right].
  \end{equation}
  On the other hand, the number of Hamiltonian terms in $H$ is at most $3\, m\, 2^{k'}$.

  Finally, $H$ simulates the sum of all these terms (\cref{eq:Hfinal}) by applying the closure property.
  If closure is trivial, e.g.\ because there are no constraints on the locality of interactions, then this incurs no overhead.
  Otherwise, it incurs polynomial overhead
  \begin{equation}
    g\left(m \, 2^{k'} \left[ p\left(k'+1\right) + p\left(1\right)+p\left(p\left(1\right)\right)\right], 3\, m\, 2^{k'}\right).
  \end{equation}

  A similar calculation shows that the number of variables involved in the construction of the partition function simulation is at most
  \begin{equation}
    m \, 2^{k'}\left[ \left(2^{k'}-1\right) p\left(k'+1\right)+p\left(1\right)+p\left(p\left(1\right)\right)\right],
  \end{equation}
  the number of terms in the universal model is at most $m \, 2^{k'}\left( 2^{k'}+1 \right)$, and the closure overhead is again given by the polynomial $g$ of this number of variables and terms.
\end{proof}

In the above proof,  note that we do \emph{not} need to know the energy levels and configurations of the overall Hamiltonian $H'$ in order to simulate it. It is sufficient to know the energy levels and spin configurations of the individual terms $h'_I$ in the Hamiltonian $H' = \sum_I h'_I$, and these can always be computed efficiently.
Note also that the proof is constructive.

\begin{remark}[Parsimonious reductions]
  \label{rem:parsimonious}
  Steps 1' to 3' in the proof of \cref{thm:universal_characterisation} are not needed if the faithful \textsc{SAT} reduction for a particular model has an additional property.
  A \emph{parsimonious} reduction between two computational problems is a reduction that preserves the number of solutions \cite{Mo11,Pa95}.
  A \emph{$c$-monious} reduction is a reduction such that the number of solutions of every instance is multiplied by some constant factor $c$ (independent of the instance).
  If there is a $c$-monious reduction from \textsc{SAT} to the \textsc{GSE} of the model we are interested in, then Steps 1 to 3 in the proof of \cref{thm:universal_characterisation} already guarantee the partition function of the target model is reproduced up to a constant factor $\gamma$, which is usually an easy-to-compute function of $c$ (see \cref{def:simulation}).
\end{remark}
We will see examples of models that admit $c$-monious reductions in \cref{sec:examples}.

\cref{thm:universal_characterisation} and \cref{lem:2DIsing} imply that one of the simplest  2D models, namely the 2D Ising model with fields (\cref{def:2DIsing}), is universal.

\begin{corollary}
  The 2D Ising model with fields is universal.
  \label{cor:2DIsingfields}
\end{corollary}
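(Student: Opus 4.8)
The plan is to invoke \cref{thm:universal_characterisation} directly: since that theorem characterises universality as closure together with a polynomial-time faithful reduction from \textsc{SAT}, it suffices to verify these two properties for the 2D Ising model with fields. Neither property needs to be proved from scratch; both can be assembled from known reductions and a standard Ising gadget.

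For the faithful reduction, I would chain the textbook reduction from \textsc{SAT} to \textsc{Vertex Cover}~\cite{Ga79} with Barahona's reduction from \textsc{Vertex Cover} to \textsc{Ground State Energy} of the 2D Ising model with fields~\cite{Bar82}, and then check that the composition is \emph{witness-preserving} in the sense of \cref{def:faithful}. Concretely: the \textsc{SAT}$\to$\textsc{Vertex Cover} reduction sends a satisfying assignment to a minimum vertex cover whose restriction to a fixed set of ``literal'' vertices encodes that assignment, and Barahona's construction represents each graph vertex by a lattice spin whose ground-state value indicates membership in the cover. Composing the two identifications, the Boolean variables correspond to a fixed subset $S$ of lattice spins, and the ground-state configuration of $S$ reads off a satisfying assignment — which is exactly what \cref{def:faithful} demands.

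For closure, given two 2D-Ising-with-fields Hamiltonians $H^1_{AB}$ and $H^2_{BC}$ sharing the spin set $B$, I would construct a single square-lattice Hamiltonian that simulates $H^1_{AB}+H^2_{BC}$ by (a) laying both interaction graphs out on a common enlarged lattice; (b) using strongly ferromagnetic Ising couplings (coupling strength large compared with the cut-off $\Delta$) as ``wires'' that copy the value of each shared $B$-spin to every location where it is needed, so that in the low-energy sector all copies are locked equal; and (c) replacing each unavoidable edge crossing by the planar ``crossing gadget'' from the \textsc{Vertex Cover}$\to$\textsc{Planar Vertex Cover} reduction~\cite{Ga76}, which faithfully transmits both wires through the crossing at the cost of a fixed energy offset. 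Since the wires and crossing gadgets only impose equality constraints on the low-energy subspace and contribute an overall constant to every low-lying level, the restriction of the resulting Hamiltonian to $A\cup B\cup C$ has the same spectrum, the same ground-state configurations, and the same (uniformly rescaled) partition function as $H^1_{AB}+H^2_{BC}$, so it simulates it in the sense of \cref{def:simulation}.

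The main obstacle is making the closure construction precise: one must specify a lattice embedding and wire routing for the shared $B$-spins, bound the number of added spins polynomially as required by \cref{def:closure,def:universal}, and — most delicately — verify that the ferromagnetic copy-couplings and crossing gadgets add the \emph{same} constant energy offset to every low-lying level, so that even \cref{part:partition_function} of \cref{def:simulation} is satisfied and not just \cref{part:eigenvalues,part:eigenstates}. By contrast, the faithful-reduction half is essentially bookkeeping once one confirms that the reductions of~\cite{Ga79,Bar82} carry the witness structure along as described.
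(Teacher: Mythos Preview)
Your proposal is correct and follows essentially the same approach as the paper: invoke \cref{thm:universal_characterisation}, obtain the faithful reduction by composing \textsc{SAT}$\to$\textsc{Vertex Cover}~\cite{Ga79} with Barahona's reduction~\cite{Bar82}, and establish closure via ferromagnetic ``wire'' couplings together with the crossing gadget of~\cite{Ga76}. The paper in fact spells out exactly these ingredients in the paragraph following \cref{def:faithful}, and the corollary is then immediate; your additional remarks about verifying uniform energy offsets for \cref{part:partition_function} go slightly beyond what the paper makes explicit, but the underlying argument is the same.
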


We remark that \cref{thm:universal_characterisation} result implies and explains the recently found ``completeness results'', where a model is called ``complete'' if its partition function can equal (up to a factor) the partition function of any other model~\cite{Va08}.
Any universal model is complete by choosing $\Delta=\infty$.
The 2D Ising model with fields for \emph{imaginary} coupling strengths and fields~\cite{Va08,Ka12b}, the 3D Ising model for a restricted class of models~\cite{De09a}, and the 4D Ising lattice gauge theory~\cite{De09b} were all shown to be complete in this sense, and similar results were found for $\phi^{4}$ theories~\cite{Ka12} and models with continuous variables~\cite{Xu11}.
Our results also show that the 2D Ising model with physical \emph{real}-valued coupling strengths and fields is universal (\cref{cor:2DIsingfields}).

\section{Universal Hamiltonian Simulation -- continuous spins}\label{sec:continuous}
We now show that the results of \cref{thm:universal_characterisation} extend to models with continuous spin degrees of freedom.
Nothing in the proof of \cref{thm:universal_characterisation} (other than the notation) depends on the universal model having discrete degrees of freedom.
The only requirement is that the universal model has a faithful reduction from SAT, which implies that there is an identification between boolean values and particular values of the continuous degrees of freedom, and that the model is closed.
\cref{thm:universal_characterisation} therefore extends immediately to universal models with continuous spins (or more general continuous local degrees of freedom).

It remains to extend \cref{thm:universal_characterisation} to \emph{target} models with continuous spin degrees of freedom, i.e.\ where the spins are normalised vectors in $\mathbb{R}^D$, or equivalently points on the surface of the unit $D$-dimensional sphere, $S^D$.
Of course, no model with discrete degrees of freedom can exactly simulate a model with continuous degrees of freedom.
We need to show that the energy levels and configurations of the continuous model can be approximated to any desired accuracy by the universal model, implying that even universal models on Ising spins can simulate (in the strong sense of \cref{def:simulation}) models with continuous variables to any desired precision.
The following extends \cref{def:simulation} of Hamiltonian simulation to models with continuous spin degrees of freedom.
Similarly as above, we denote the spin configuration by $s=(s_1,\dots,s_n)$, where $s_{i}\in S^{D}$.

\begin{definition}[Hamiltonian Simulation -- continuous case]
\label{def:continuous_simulation}
  We say that a spin model can \emph{simulate} a Hamiltonian $H'$ on continuous spins $s_i\in S^D$ if it satisfies \emph{all three} of the following:
  \begin{enumerate}
  \item For any $\Delta > \max_{s'} H'(s')$ and any $0 < \delta < 1$, there exists a Hamiltonian $H$ from the model whose low-lying energy levels $\{E_\sigma=H(\sigma):E_\sigma<\Delta\}$ approximate the energy levels $\{E'_{s'}=H'(s')\}$ of $H'$ to within additive error at most $\delta$.
    \label[part]{part:eigenvalues_cont}
  \item There is a one-to-one correspondence $F(\sigma_{P_i}) = s'_i$ between spin-configurations $\sigma_{P_i}$ of fixed subsets $P_i$ of spins in $H$, and the states $s'_i$ of each spin in $H'$, such that for all configurations $\sigma$ with energy $E_\sigma<\Delta$, the energy $E'_{s'} = H'(s')$ with $s'_i = F(\sigma_{P_i})$ satisfies $\abs{E'_{s'} - E_\sigma} \leq \delta$.
    \label[part]{part:eigenstates_cont}
  \item
    The partition function
    $Z_H(\beta) = \sum_\sigma e^{-\beta H(\sigma)}$ of $H$ reproduces the partition function
    $Z_{H'}(\beta) = \sum_{s'} e^{-\beta H'(s')}$ of $H'$ up to rescaling, to within arbitrarily small error: $Z_{H'}(\beta) = \gamma(1+\delta)Z_{H}(\beta) + O(e^{-\Delta})$ for some known constant $\gamma$.
    \label[part]{part:partition_function_cont}
  \end{enumerate}
\end{definition}
As before, we denote the set of physical spins by $P=\cup_{i} P_{i}$. Its cardinality is again independent of $\Delta$, and scales as $\poly(1/\delta)$.

Similar to the proof of \cref{thm:universal_characterisation}, we first prove that a single $k$-body Hamiltonian term on continuous spins can be approximated by a Hamiltonian on $q$-level Ising spins.
Note that the resulting Ising spin Hamiltonian is then simulatable (in the sense of \cref{def:simulation}) by a universal model, by virtue of \cref{thm:universal_characterisation}.
The result for an arbitrary target Hamiltonian made up of multiple $k$-body terms on continuous spins then follows easily, by approximating each term individually, and using closure of the universal model to simulate the sum of all these individual terms.
The main difference between this and the discrete case is that we must carefully keep track of the approximation error throughout.

\begin{lemma}\label{lem:continuous_term}
  Let $h'(s_1,\dots,s_k)$ be a $k$-body Hamiltonian term on continuous spins $s_i\in S^D$ which is Lipschitz-continuous in each argument, i.e.\ $\abs{h'(s_1,\dots,s_i+\epsilon,\dots,s_k) - h'(s_1,\dots,s_i,\dots,s_k)} \leq L\epsilon$ for any $i$, where $L$ is the Lipschitz constant.
  Then $h'$ can be simulated to accuracy $kL\epsilon$ by a $k$-body Hamiltonian term on $q$-level Ising spins, where $q = \left(\tfrac{2}{\epsilon}+1\right)^D = O\left(1/\epsilon^D\right)$.
\end{lemma}

\begin{proof}
  To approximate the continuous spins by discrete Ising spins, we discretise them in the obvious way.
  Take an $\epsilon$-net for the space of the spins, i.e.\ a set of points $\mathcal{P} = \left\{ p_i \in S^D\right\}_{i=1}^q$ such that any point $s \in S^D$ is $\epsilon$-close to some point $p_i$ in the net: $||s-p_i|| \leq \epsilon$, where $||y||$ is the ``maximum norm'' for vectors, $||y|| := \max_j \left|y_j\right|$.
  A standard argument shows that it suffices to take a net containing $q = O(1/\epsilon^D)$ points.

  As this argument is straightforward, we repeat it here to make the proof self-contained.
  Assume that $\mathcal{P}$ is an $\epsilon$-net with cardinality $q$.
  Since $p_i\in S^{D}$, the union of the balls of radius $\epsilon/2$ around $p_i$ is contained in the ball of radius $1+\epsilon/2$ centered at the origin.
  Thus
  \begin{equation}
    q \left(\epsilon/2\right)^D \leq \left(1+\epsilon/2\right)^D,
  \end{equation}
  and $q \leq \left(\tfrac{2}{\epsilon}+1\right)^D = O\left(1/\epsilon^D\right)$.

  We discretise each continous spin $s_i$ by approximating its value by the closest point $p_{\sigma_i}$ in the $\epsilon$-net, and identify each point $p_{\sigma_i}$ in the net with a distinct value $\sigma_i$ of a $q$-level Ising spin.
  Thus a configuration $(s_1,\dots,s_k)$ of the $k$ continuous spins is approximated by the configuration $(\sigma_1,\dots,\sigma_k)$ of the $k$ $q$-level Ising spins.

  Define the $k$-body Hamiltonian term $h$ on $q$-level Ising spins to take the same value as $h'$ on the corresponding points in the $\epsilon$-net: $h(\sigma_1,\dots,\sigma_k) = h'(p_{\sigma_1},\dots,p_{\sigma_k})$.
  Since $h'$ is Lipschitz-continuous in each argument, we have
  \begin{equation}
    \Abs{ h(\sigma_1,\dots,\sigma_k) - h'(s_1,\dots,s_k) }
    = \Abs{ h'(p_{\sigma_1},\dots,p_{\sigma_k}) - h'(s_1,\dots,s_k) }
    \leq kL\epsilon,
  \end{equation}
  which completes the proof.
\end{proof}

\Cref{thm:universal_characterisation} can now readily be extended to show that any universal model can also simulate (in the sense of \cref{def:continuous_simulation}) all models on continuous spin degrees of freedom.

\begin{corollary}
  \label{cor:cont}
  Any Hamiltonian $H'$ on continuous spin degrees of freedom which is Lipschitz-continuous can be simulated (in the sense of \cref{def:continuous_simulation}) by a universal model $H$.

  If $H'$ is a Hamiltonian on $n$ continuous $D$-dimensional spins with $m$ $k$-body terms, each of which is Lipschitz-continuous in each argument with constant $L$ (cf.\ \cref{lem:continuous_term}), then the overhead for approximating the energy levels and configurations to accuracy $\delta$ is at most the overhead given in \cref{thm:universal_characterisation} for \mbox{$q = \left(\tfrac{2\, m\, k\, L}{\delta}+1\right)^D$}.

  Approximating the partition function to accuracy $\delta$ (see \cref{def:continuous_simulation}) increases this to at most the overhead given in \cref{thm:universal_characterisation} for $q = \left(\tfrac{2\, \beta\, m\, k\, L}{\delta}+1\right)^D$.
\end{corollary}

\begin{proof}
  That the energy levels and corresponding configurations can be simulated follows almost immediately from \cref{lem:continuous_term,thm:universal_characterisation}.
  If we use \cref{lem:continuous_term} to approximate each term in $H'$ to accuracy $k\, L\, \epsilon$, and sum over all these terms to obtain a Hamiltonian $H$ on $q$-level Ising spins, we approximate the overall Hamiltonian $H'$ to accuracy $m\, k\, L\, \epsilon$.
  Thus to achieve an accuracy $\delta$, we need $\epsilon = \delta/(m\, k\, L)$, hence $q \leq \left(\tfrac{2\, m\, k\, L}{\delta}+1\right)^D = O\left(\left(m\, k\, L/\delta\right)^D\right)$.
  But this Ising spin Hamiltonian can be simulated by the universal model, by virtue of \cref{thm:universal_characterisation}, and we are done.

  To show that this also approximates the partition function up to an overall rescaling (as in \cref{def:continuous_simulation}), we must compute the approximation error in the partition function.
  Note that we are comparing the partition function of a discrete model $H$ -- given by a sum over $q^n$ configurations -- to that of a continuous model $H'$ -- given by an integral over the space $(S^D)^{\times n}$.
  Thus the partition function of the discrete model will need to be rescaled by $1/q^{n\, D}$ to account for the volume element of the continuous space.

  Taking this rescaling into account in the definition of $Z_H(\beta) = \sum_\sigma e^{-\beta \, H(\sigma)}/q^{n\, D}$ and using Lipschitz-continuity of $H'$, we obtain
  \begin{equation}
    e^{-\beta \, m\, k\, L\, \epsilon} \, Z_H(\beta) \, \leq \, Z_{H'}(\beta)\, \leq\, e^{\beta\, m\, k\, L\, \epsilon} \, Z_H(\beta).
  \end{equation}
  Thus to achieve the multiplicative error $\delta$ in \cref{def:continuous_simulation}, we need $e^{\beta\, m\, k\, L\, \epsilon}-1\leq\delta$.
  For $\delta < 1$ it suffices to take $\epsilon \leq \delta/(\beta \, m\, k\, L)$, and the claimed value of $q$ follows from \cref{lem:continuous_term}.

  Once again, having approximated the continuous Hamiltonian $H'$ by a Hamiltonian $H$ on $q$-level Ising spins, the latter can now be simulated exactly by the universal model, by virtue of \cref{thm:universal_characterisation}.
\end{proof}

\section{Examples}\label{sec:examples}
In this section, we give various examples of how discrete models can be simulated using the Ising model. We also present alternative and simpler reductions from \textsc{SAT} to the \textsc{GSE} of other types of Ising model,
which are useful in showing that other important and well studied models are also universal.

\begin{definition}[The 3D Ising model]
  \label{def:3DIsing}
  The ``3D Ising model'' is defined as a subfamily of the Ising model with fields (\cref{def:Isingmodel}) in which $G$ is restricted to be a three-dimensional (3D) square lattice, and there are no local fields (i.e.~$h_i=0$ for all $i$).
\end{definition}

Note that ``the 3D Ising model'' is defined without local fields.
We could prove directly that the 3D Ising model is universal by showing that it can simulate the 2D Ising model with fields, which is universal.
But by introducing yet another NP-complete computational problem, we can prove this in a simpler and more direct way.
\begin{definition}[\textsc{1-in-3-SAT}]
  \label{1in3SAT}
  Given a set of ``clauses'' each containing three boolean variables or their negations, does there exist an assignment to the variables in which \emph{exactly} one variable per clause is true?
\end{definition}
Clauses in a \textsc{1-in-3-SAT} problem will be called \emph{1-in-3-SAT clauses}. Note the clauses are merely subsets of the variables, not disjunctions.

\begin{example}
  The 3D Ising model is universal.
  \label{ex:3DIsing}
\end{example}

\begin{proof}[of \cref{ex:3DIsing}]
By \cref{thm:universal_characterisation}, we only need to show that the \textsc{GSE} of the 3D Ising model (\cref{def:3DIsing}) admits a faithful reduction from \textsc{SAT}, and that the model is closed.
To see the former, we proceed as follows:

\begin{enumerate}
\item
  \emph{Express the Boolean formula in 3 Conjunctive Normal Form (3CNF)},
  exactly as explained in \cref{3CNF} in the proof of \cref{lem:2DIsing}.

\item
  \emph{Reduction from \textsc{3SAT} to \textsc{1-in-3-SAT}}.
  We use the standard reduction \cite{Sc78}.
  Each \textsc{3SAT} clause $\chi:=(x\lor y\lor z)$ is transformed into three \textsc{1-in-3-SAT} clauses:
  \begin{equation}
    \tilde\chi := \mathcal{R}(\bar x,a,b) \land \mathcal{R}(y,b,c) \land\mathcal{R}(\bar z, c, d),
  \end{equation}
  where we have introduced four additional boolean variables, $a,b,c,d$, and $\mathcal{R}(x,t,z)$ denotes the boolean function which is 1 iff exactly one of the variables $x$, $y$ or $z$ is 1.
  It is easy to verify that $\chi$ is satisfiable in \textsc{3SAT} iff $\tilde{\chi}$ is satisfiable in \textsc{1-in-3-SAT}.
  Moreover, for all but one of the satisfying assignments of $\chi$, there exists a unique assignment to $(a,b,c,d)$ such that $\tilde\chi$ can be satisfied.
  (The only exception is $(x,y,z)=(1,0,1)$, for which both $(a,b,c,d)=(0,1,0,1)$ and $(1,0,1,0)$ satisfy $\tilde{\chi}$.)

\item
  \emph{Reduction from \textsc{1-in-3SAT} to \textsc{GSE} of Ising with fields}.
  We transform an instance $\tilde{\phi}$ of \textsc{1-in-3SAT} to an Ising model with fields (\cref{def:Isingmodel}) on a graph $G=(V,E)$ and with a parameter set $\mathcal{J}$ as follows.
  First, for every boolean variable $x_i$ in $\tilde{\phi}$, we create a pair of vertices in $G$ labelled $x_i$ and $\bar{x}_{i}$, associated to the Ising variables $\sigma_i$, $\sigma_{\bar i}$, respectively, and construct the Ising term
  \begin{equation}
    h_i = \sigma_i \sigma_{\bar{i}} + 1 .
  \end{equation}
  Second, for each 1-in-3-SAT clause $c$ in $\tilde{\phi}$, define the Ising term
  \begin{equation}
    h_c = \sum_{i<j} \sigma_i \sigma_j + \sum_i \sigma_i + 2 ,
  \end{equation}
  where $\{\sigma_i\}$ are the Ising variables corresponding to the Boolean variables $\{x_i\}$ that appear in $c$.
  Finally, wherever we have a Boolean variable in a fixed state (0 or 1), we use the Hamiltonian
  \begin{equation}
    h_f = 1 +\sigma_f \quad \textrm{or}\quad h_f= 1 - \sigma_f ,
  \end{equation}
  respectively, where $f$ indexes the spin representing the variable whose value is fixed.
  The full Hamiltonian is then
  \begin{equation}
    H_{G,\mathcal{J}} = \sum_{\textrm{clauses }c} h_c + \sum_{\textrm{variables }i} h_i + \sum_{\textrm{fixed variables }f} h_f + K
    \label{eq:HGJ}
  \end{equation}
  where $K=2 (\tilde{m} + p)$, where $\tilde{m}$ is the number of clauses of $\tilde{\phi}$ and $p$ the number of variables in $\tilde{\phi}$ such that the variable and its negation appear in $\tilde \phi$.
  It is easy to see that $\tilde{\phi}$ is satisfiable in 1-in-3SAT iff $H_{G,\mathcal{J}}$ has a ground state with energy 0.
  Moreover, the state of the Ising spin $\sigma_j\in\{-1,1\}$ associated to the Boolean variable $x_j\in\{0,1\}$ is determined by
  \begin{equation}
    \sigma_j = 2 x_j -1 .
    \label{eq:Boolean-Ising}
  \end{equation}

  For example, the satisfiability of  $\phi_{1}$,
  \begin{equation}
  \phi_{1}=(x_{1}\lor x_{2} \lor \sigma_{b_{1}}) \land
(x_{1}\lor \bar x_{2} \lor \bar \sigma_{b_{1}}) \land
(\bar x_{1}\lor x_{2} \lor \bar \sigma_{b_{1}}) \land
(\bar x_{1}\lor \bar x_{2} \lor \bar \sigma_{b_{1}}) ,
\label{eq:phi1k2}
  \end{equation}
  in \textsc{3SAT} is mapped to the satisfiability of $\tilde{\phi}_1$,
   \begin{align}
     \tilde{\phi}_{1} =&
     \mathcal{R}(\bar x_{1} ,a_{1} ,b_{1} ) \land \mathcal{R}(x_{2} , b_{1} , c_{1} ) \land \mathcal{R}(\bar \sigma_{b_{1}} , c_{1} , d_{1} ) \land \nonumber\\
                      & \mathcal{R}(x_{1} , a_{2} , b_{2} ) \land  \mathcal{R}(x_{2} , b_{2} , c_{2} ) \land  \mathcal{R}(\sigma_{b_{1}} , c_{2} , d_{2} ) \land \nonumber\\
                      & \mathcal{R}(\bar x_{1} , a_{3} , b_{3} ) \land  \mathcal{R}(\bar x_{2} , b_{3} , c_{3} ) \land  \mathcal{R}(\sigma_{b_{1}} , c_{3} , d_{3} ) \land \nonumber\\
                     & \mathcal{R}(x_{1} , a_{4} , b_{4} ) \land  \mathcal{R}(\bar x_{2} , b_{4} , c_{4} ) \land  \mathcal{R}(\sigma_{b_{1}} , c_{4} , d_{4} ) ,
                       \label{eq:tildephi1}
  \end{align}
  in \textsc{1-in-3SAT}, which is mapped to an Ising model on the graph shown in \cref{fig:1in3SAT}.

  \begin{figure}[t]
    \centering \includegraphics[width=0.5\textwidth]{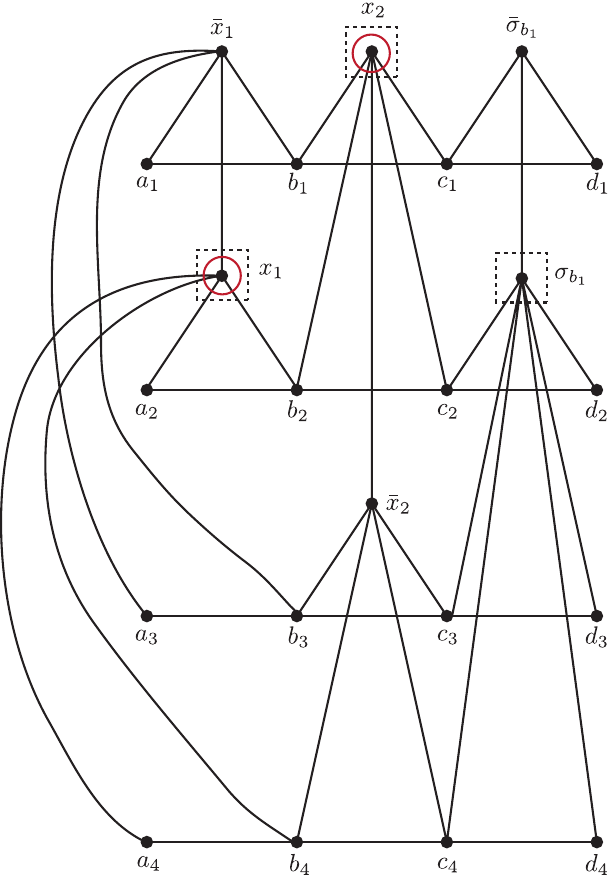}
    \caption{The formula $\tilde{\phi}_1$ (\cref{eq:tildephi1}) is satisfiable in \textsc{1-in-3SAT} iff the Ising Hamiltonian \cref{eq:HGJ} defined on the graph shown here has a ground state of energy 0.
            The variables belonging to $R$ (see \cref{def:faithful}) are marked with dashed squares, and those in the physical set $P$ (see \cref{def:simulation}) are marked with red circles.}
    \label{fig:1in3SAT}
  \end{figure}

  Let $\tilde{n}$ and $\tilde{m}$ denote the number of variables and clauses of $\tilde{\phi}$, respectively.
  Then this construction yields a graph $G$ with $\tilde{n}$ vertices and $3\tilde m +p$ edges.
  The relation with $\phi$, the 3CNF expression of $\tilde{\phi}$, is also straightforward: we have $\tilde{n} = n+4m$ and $\tilde{m} = 3m$, where $n$ and $m$ denote the number of variables and clauses in $\phi$, respectively.

  Now consider the sequence of transformations \textsc{SAT} $\to$ \textsc{3SAT} $\to$ \textsc{1-in-3SAT} $\to$ \textsc{GSE} of Ising with fields that we have applied to a formula $\phi_j$ depending on boolean variables  $(x_1,\ldots, x_k,b_{j})$
  (such as $\phi_{1}$ for $k=2$ and the first flag spin,  see \cref{eq:phi1k2}).
  Each of the variables in $\phi_j$ is mapped to a single spin in \textsc{GSE} of Ising with fields, which form the set $R$ of spins, $\sigma_R=(\sigma_1,\sigma_2,\ldots, \sigma_k, \sigma_{b_j})$.
  The rest of spins in $H$ are auxiliary, which we denote with the subscript $\textrm{aux}$.
  Then for each configuration of $\sigma_R$ for which there exists some state  $\sigma_{\textrm{aux}}$ such that $(\sigma_R,\sigma_\textrm{aux})$ is a ground state, the Boolean variable assignment corresponding to $\sigma_R$ is satisfying.
  Hence the reduction is faithful.
  Note that the set $P$ of physical spins is $\sigma_P=(\sigma_1,\sigma_2,\ldots, \sigma_k)$.

\item
  \emph{Reduction from \textsc{GSE} of Ising with fields to \textsc{GSE}  of 3D Ising model}.
  \label{Ising->3DIsing}
  First we reduce the \textsc{GSE} of the Ising model with fields on a graph $G$ to the
  \textsc{GSE} of the Ising model without fields on a graph $G'$.
Consider the Ising Hamiltonian with fields of \cref{eq:HGJ}, which can be written as
    \begin{equation}
      H_{G} ( \sigma ) = \sum_{(i,j)\in E} J_{ij} s_i s_j + \sum_{i\in V} h_i \sigma_i + K'\, .
    \end{equation}
    for some parameters $\{J_{i,j}\},\{h_{i}\}$ and $K'$.
    We introduce one additional spin, $\sigma_{\textrm{spike}}$, and define the following Ising Hamiltonian without fields,
    \begin{equation}
      H_{G'}( \sigma , \sigma_{\textrm{spike}})= \sum_{(i,j)\in E} J_{ij} \sigma_i \sigma_j
      + \sum_{i\in V} h_i \: \sigma_i \sigma_{\textrm{spike}}
      + \Delta \frac{1-\sigma_{\textrm{spike}}}{2} + K' .
    \end{equation}
Now, $H_{G'}$ simulates the sector with energy $<\Delta$ of $H_{G}$, since  $\sigma_{\textrm{spike}}=1$ for all configurations with energy $<\Delta$.

    Finally, we simulate $H_{G'}$ with an Ising Hamiltonian without fields on a 3D square lattice.
    To do this, we embed $G'$ in a 3D square lattice (which only needs to be polynomially larger \cite{Co94}), and obtain $G'$ as a minor of this square lattice by edge contractions and deletions.
    The edge contraction operation is accomplished by setting $J_{i,j}>\Delta$, and edge deletion by setting $J_{i,j}=0$.
    (This is the same idea as \cref{IsingG->2D} in the proof of \cref{lem:2DIsing}, but here both $G'$ and the 3D square lattice are non-planar graphs, and we do not have local fields.)
    The resulting 3D Ising model without fields will simulate $H_{G'}$.

\end{enumerate}

Finally, the 3D Ising model is closed because given any two 3D square lattices $G$ and $G'$, $G\cup G'$ can be embedded in 3D space and obtained as the minor (again using the techniques of \cref{IsingG->2D} in the proof of \cref{lem:2DIsing}) of some other, polynomially larger 3D square lattice $G''$ \cite{Co94}.
\end{proof}

We now present another example that illustrates in a more direct way the idea of encoding a Boolean formula into the ground state, and how we use it to localise information about the spin configuration in single ``flag'' spin.
At the same time it illustrates how specific features of the model can be exploited to simplify the general construction.

\begin{definition}[Potts model]
  \label{def:potts}
  The Potts model on an arbitrary graph $G=(V,E)$ with inhomogeneous couplings $J_{i,j}\in \mathbb{R}$ is defined as the set of Hamiltonians of the form
  \begin{equation}
    H_{\textrm{Potts}, G}(\sigma)= \sum_{(i,j)\in E} J_{i,j}\: \delta(\sigma_i,\sigma_j)
    \label{eq:potts}
  \end{equation}
  where $J_{i,j}\in \mathbb{R}$ and $\sigma_{i}\in \{1,\ldots,q\}$. $\delta(\sigma_i,\sigma_j)=1$ if $\sigma_i=\sigma_j$, otherwise $\delta(\sigma_i,\sigma_j)=0$.
\end{definition}
The Potts model is a generalisation of the Ising model \cite{Wu84}; another possible generalisation is the clock model (\cref{def:clock}), which we will consider later.

\begin{example}[Potts model to Ising model with fields]
  \label{ex:Potts}
  We consider the Potts model with $q=4$ (see \cref{def:potts}) defined on a graph $G'=(V',E')$,
  \begin{equation}
    H'_{G'} (\sigma')= \sum_{(i,j)\in E'} J_{i,j} \delta(\sigma'_i,\sigma'_j)
    \label{eq:PottsG'}
  \end{equation}
  with $\sigma'_i\in\{1,2,3,4\}$.
  We simulate this model with the Ising model with fields (\cref{def:Isingmodel}) on a graph $G=(V,E)$.
  We will follow the Steps in the proof of \cref{thm:universal_characterisation}, but will adapt them to this specific model.

  We first consider a single interaction of $H'_{G'}$ of edge $e$ between, say, $\sigma'_{1}$ and $\sigma'_{2}$,
  \begin{equation}
    J_{1,2}\: \delta(\sigma'_1,\sigma'_2) .
    \label{eq:potts-1term}
  \end{equation}
  We will later use closure of the Ising model with fields to simulate $H'_{G'}$.

We map each $\sigma'_{i}$ to $\lceil \log_{2} 4\rceil = 2$ binary variables $x_j\in \{0,1\}$: $(x_1,x_2)$ will be the binary expression of $\sigma_1$, and similar for $(x_3,x_4)$ and $\sigma_2$.

  \begin{enumerate}
  \item \label[step]{im:potts-formula}
    Now we present an abridged version of Step 1 in the proof of \cref{thm:universal_characterisation}.
    Since the interaction term \eqref{eq:potts-1term} has only one non-zero energy, we construct only one flag spin $\sigma_b$. We want to construct a hamiltonian such that  in its ground state, $\sigma_b= 1$  only if $\sigma'_1=\sigma'_2$  and $\sigma_b= -1$ for any other configuration of $\sigma'_1,\sigma'_2$.
    To this end, given Boolean variables $u,v,w$, consider the formula
    \begin{equation}
      \phi(u,v,w):= (u \lor v\lor w) \land (u \lor \bar v\lor \bar w) \land (\bar u \lor v\lor \bar w) \land (\bar u \lor \bar v\lor w) .
      \label{eq:phi}
    \end{equation}
    The satisfying assignments of $\phi$ verify that $w=1$ iff $u=v$.
    Now, let
    \begin{equation}
      \Phi:=\phi(x_1,x_3,g_1) \land \phi(x_2,x_4,g_2) \land \phi(g_1,g_2,b) ,
      \label{eq:Phi}
    \end{equation}
    where we omit the dependence of $\Phi$ on the variables for clarity.
    The satisfying assignments of the first and second clause are such that $g_1=1$ iff  $x_1=x_3$, and $g_{2}=1$ iff $x_2 =x_4$, respectively.
    The third clause is satisfied if $b=1$ iff $g_{1}=g_{2}$.
    This is nearly what we want, except for the fact that  $b=1$ for $g_{1}=g_{2}=0$ is also a valid satisfying assignment. We will increase the energy of this last configuration in the next step, and thereby solve this problem.

    (This could of course be directly solved by replacing the last clause  in \eqref{eq:Phi} by a formula which is satisfied if $b=1$ iff $g_{1}=g_{2}=1$.
    The reduction of this formula to the Ising model with fields, however, introduces different number of ground state configurations for each satisfying assignment, which would make the reproduction of the partition function (\cref{im:Z-potts} below) less straightforward. The reduction of \eqref{eq:Phi}, in contrast, will give us an example of a $c$-monious reduction (see \cref{rem:parsimonious})).

  \item
    \label{im:sat-ising-potts}
    We now reduce the satisfiability of $\Phi$ to the \textsc{GSE} of the Ising model with fields.
    We first reduce \textsc{3SAT} to \textsc{Vertex Cover} (\cref{im:3SAT-VC} in the proof of \cref{lem:2DIsing}), and then \textsc{Vertex Cover} to \textsc{GSE} of Ising with fields (as in \cref{im:PVC-Ising} in the proof of \cref{lem:2DIsing}; this reduction is the same irrespective of whether the initial graph is planar or not).

    For the formula $\phi$ this results in the Ising model on the graph $G_\phi$ shown in the top left of \cref{fig:Potts}, with the Hamiltonian defined in \eqref{eq:HGising} with $|V| = 2n+3m= 18$, $|E| = n+6m=27$, $K_{is}=n+m=7$ and a constant degree $\deg(\sigma_i)=3$ for all $i$, i.e.
    \begin{equation}
      h= - \sum_{i \in V} \sigma_i + \sum_{(i,j)\in E} \sigma_i \sigma_j + \frac{23}{2} .
      \label{eq:HamPotts}
    \end{equation}
    This way we map $\phi(x_1,x_3,g_1)$, $\phi(x_2,x_4,g_4)$ and $\phi(g_1,g_2,b)$ to $h^{(1)}$, $h^{(2)}$ and $h^{(3)}$ respectively, and define
    \begin{equation}
      h_{1} = h^{(1)} + h^{(2)} + h^{(3)}
    \end{equation}
    which is defined on the graph $G_e=(V_e,E_e)$ shown in \cref{fig:Potts}.

    By construction, $\Phi$ is satisfiable iff $h_{1}$ has a ground state of 0 energy.
    Moreover, all ground states of $h_{1}$ satisfy that $\sigma_b=1$ iff $\sigma_{g_{1}}=\sigma_{g_{2}}$.
We still need to enforce that additionally $\sigma_{g_{1}}=\sigma_{g_{2}}=1$ in any ground state configuration.
We follow the same argument as in  \cref{im:PVC-Ising} in the proof of \cref{lem:2DIsing}: we multiply the hamiltonian $h_{1}$ by a large factor, say 100, and add local magnetic fields to the spins $\sigma_{g_{1}}$, $\sigma_{g_{2}}$:
\begin{equation}
h_{e,1}= 100 \: h_{1} + \frac{1-\sigma_{g_{1}} }{2} + \frac{1-\sigma_{g_{2}} }{2}  .
\end{equation}
The ground state configurations (with 0 energy) of $h_{e,1}$ must be ground state configurations of $h_{1} $,  \emph{and} $\sigma_{g_{1}}$ and $\sigma_{g_{2}}$ must be in state 1.

  \item
    We now implement Step 2 of the proof of \cref{thm:universal_characterisation}.
    This is trivial with the Ising model with fields, since it has local magnetic fields:
    \begin{equation}
      h_{e,2} = J_{1,2}\: \frac{\sigma_{b}+1}{2} .
    \end{equation}

    The Hamiltonian
    \begin{equation}
      h_e = \Delta h_{e,1}+h_{e,2}
    \end{equation}
    simulates \eqref{eq:potts-1term} as far as \cref{part:eigenvalues} and \cref{part:eigenstates} of \cref{def:simulation} are concerned.
Note that here the set of variables $P$ (see \cref{part:eigenstates} of \cref{def:simulation}) is $\sigma_P=(\sigma_1,\sigma_2,\sigma_3,\sigma_4)$.

    Finally, the Hamiltonian
    \begin{equation}
      H_G = \sum_{e\in E'} h_e
      \label{eq:IsingPotts}
    \end{equation}
    simulates the Potts Hamiltonian \eqref{eq:PottsG'} since the Ising model with fields is trivially closed.

  \item
    \label[step]{im:Z-potts}
    For the partition function, we do not require Steps 1'-3' in the proof of \cref{thm:universal_characterisation}, as the construction we have presented introduces a constant degeneracy in the number of solutions (it is an example of a $c$-monious reduction; see \cref{rem:parsimonious}).
    Explicitly, every satisfying assignment of $\phi$ (Eq.~\eqref{eq:phi}) corresponds to 3 vertex covers of size 11.
    For example, take the assignment $(u,v,w) = (1,1,1)$ and its corresponding graph.
    The triangle corresponding to the first clause has 3 valid vertex covers of size 2, whereas each of the other 3 triangles has exactly one valid vertex cover of size 2.
    An analogous argument can be made for the other satisfying assignments.
    Then the vertex covers are mapped one-to-one to the ground state configurations of the Ising model.
    Since a single edge $e$ involves 3 clauses $\phi$, we have
    \begin{equation}
      \sum_{\sigma'_1,\sigma'_2} e^{-\beta h_e'(\sigma'_1,\sigma'_2)} =
      \frac{1}{27} \sum_{ \sigma:  \:
      h_e(\sigma) <\Delta} e^{-\beta h_e(\sigma)} +
      \sum_{\sigma : \: h_e(\sigma )\geq \Delta} e^{-\beta h_e(\sigma)} .
    \end{equation}
    For the entire model we thus have
    \begin{equation}
      \sum_{\sigma'} e^{-\beta H'_{G'}(\sigma')} = \frac{1}{27^{|E'|}} \sum_{\sigma : \: H_{G}(\sigma) <\Delta}e^{-\beta H_G(\sigma)} + \sum_{\sigma : \:H_G(\sigma) \geq \Delta} e^{-\beta H_G(\sigma)} .
    \end{equation}
That is, $\gamma$ (\cref{part:partition_function} in \cref{def:simulation}) is here $1/27^{|E'|}$ and $\delta=0$.

  \end{enumerate}
\end{example}

Note that if the Potts model of \cref{eq:PottsG'} has some symmetries (e.g.~it is defined on a regular lattice, or is translationally invariant), these are recovered at a coarse-grained level in the Ising model; broadly speaking, at the scale of a the graph $G_\Phi$, which reproduces a single edge.
Also, clearly, if the Potts model has uniform couplings $J_{i,j}=J$ for all $i,j$, all flag spins will be subject to the same local fields $\Delta$.

We now consider another generalisation of the Ising model.
\begin{definition}[Clock model]
  \label{def:clock}
  The $q-$level clock model on an arbitrary graph $G=(V,E)$ with inhomogeneous couplings $J_{i,j}\in \mathbb{R}$ is defined as the set of Hamiltonians of the form
  \begin{equation}
    H_\textrm{clock,G}= \sum_{(i,j)\in E} J_{i,j}\: \cos\left[\frac{2\pi}{q} (\sigma_i-\sigma_j)\right]
  \end{equation}
  where $\sigma_i\in\{1,\ldots,q\}$.
\end{definition}

\begin{example}[Clock model to Ising model with fields]
  \label{ex:clock}
  We consider the clock model (\cref{def:clock}) with $q=4$ defined on a graph $G'=(V',E')$, and simulate it
  with the Ising model with fields (\cref{def:Isingmodel}) on a graph $G=(V,E)$.
  We proceed analogously to \cref{ex:Potts}.
  We consider one interaction
  \begin{equation}
    \label{eq:clock-1term}
    J_{1,2}\: \cos\left[\frac{2\pi}{4}(\sigma'_1-\sigma'_2)\right]
  \end{equation}
  with $\sigma'_i\in\{1,2,3,4\}$.
  We map $\sigma'_{1}$ to $(x_1,x_2)$ and $\sigma'_{2}$ to $(x_3,x_4)$, where $x_i\in\{0,1\}$.

  \begin{enumerate}
  \item
    We need only 4 flag spins, each signalling whether $\sigma'_{i}-\sigma'_{j}=\ell \mod 4$, with $\ell\in\{0,1,2, 3\}$.
    To this end, we consider $\phi$ defined in \cref{eq:phi} whose satisfying assignments are such that $w=1$ iff $u=v$.
    We define $\psi$ as the formula whose satisfying assignments are such that $w=1$ iff $u=\bar v$,
    \begin{equation}
      \psi(u,v,w)= (u \lor v\lor \bar w) \land (u \lor \bar v\lor w) \land (\bar u \lor v\lor w) \land (\bar u \lor \bar v\lor \bar w) .
      \label{eq:psi}
    \end{equation}
    Now define the boolean functions
    \begin{equation}
      \begin{array}{l}
        \Phi_{0} = \phi(x_1,x_3,u_{1}) \land
        \phi(x_2,x_4,u_{2}) \land
        \phi(u_{1},u_{2},b_0) \\
        \Phi_{1} = \phi(x_1,x_3,v_1) \land \phi(g_1,x_2,v_3) \land \psi(x_2,x_4,v_2) \land \phi(v_2,v_3,b_1) \\
        \Phi_{2} = \psi(x_1,x_3,w_1) \land \phi(x_2,x_4,w_2) \land \phi(w_1,w_2,b_2) \\
        \Phi_{3}= \phi(x_1,x_3,z_1) \land \psi(g_1,x_2,z_3) \land \psi(x_2,x_4,z_2) \land \phi(z_2,z_3,b_3).
      \end{array}
      \label{eq:phi-clock}
    \end{equation}
    One can easily verify that the satisfying assignments of $\Phi_{\ell}$ are nearly what we want, namely that $b_\ell=1$ iff $\sigma'_{1}-\sigma'_{2}=\ell \mod 4$, except for the assignments $b_{0}=1$ if $u_{1}=u_{2}=0$,
    $b_{1}=1$ if $v_{2}=v_{3}=0$,
$b_{2}=1$ if $w_{1}=w_{2}=0$,
$b_{3}=1$ if $z_{2}=z_{3}=0$.
This situation is completely analogous to that of  \cref{im:potts-formula} of \cref{ex:Potts}.
We will thus increase the energy of these assignments in the next step.

  \item
    As in \cref{im:sat-ising-potts} in \cref{ex:Potts}, the satisfiability of $\Phi_0$ is mapped to the ground state energy problem of the Ising Hamiltonian $h_{1}^{(0)}$.
    We multiply this hamiltonian by a large constant and  penalise the configurations with $u_{1}=u_{2}=0$, resulting in
    \begin{equation}
 h_{0,1}   = 100 \: h_{1}^{(0)} + \frac{1-\sigma_{u_{1}}}{2} + \frac{1-\sigma_{u_{2}}}{2} .
 \label{eq:hell1}
    \end{equation}
    We proceed similarly for $\Phi_\ell$, with $\ell=1,2,3$:
    we map its satisfiability to the ground state energy problem of
    $h_{1}^{(\ell)}$, and then obtain $ h_{\ell,1} $ as in \eqref{eq:hell1} (but substituting $u_{1}$, $u_{2}$ by $v_{2},v_{3}$ for $\ell=1$, $w_{1},w_{2}$ for $\ell=2$ and $z_{2},z_{3}$ for $\ell=3$).

    Finally, we only need add the local magnetic fields to the flag spins with the appropriate energy,
    \begin{equation}
      h_e = \Delta \sum_{\ell=1}^4 h_{\ell,1} + \sum_{\ell=1}^4 J_{1,2}\: \cos\left(\frac{2\pi}{4} \ell\right)\: \frac{\sigma_{b_{\ell}}+ 1}{2},
      \label{eq:he-potts}
    \end{equation}
in order to reproduce the energies and spin configurations of \cref{eq:clock-1term}. The overall Ising Hamiltonian
    \begin{equation}
      H_G = \sum_{e\in E'} h_e
    \end{equation}
    will reproduce the energies and configurations of the clock model on $G'$.

    \begin{figure}[t]
      \centering \includegraphics[width=0.99\textwidth]{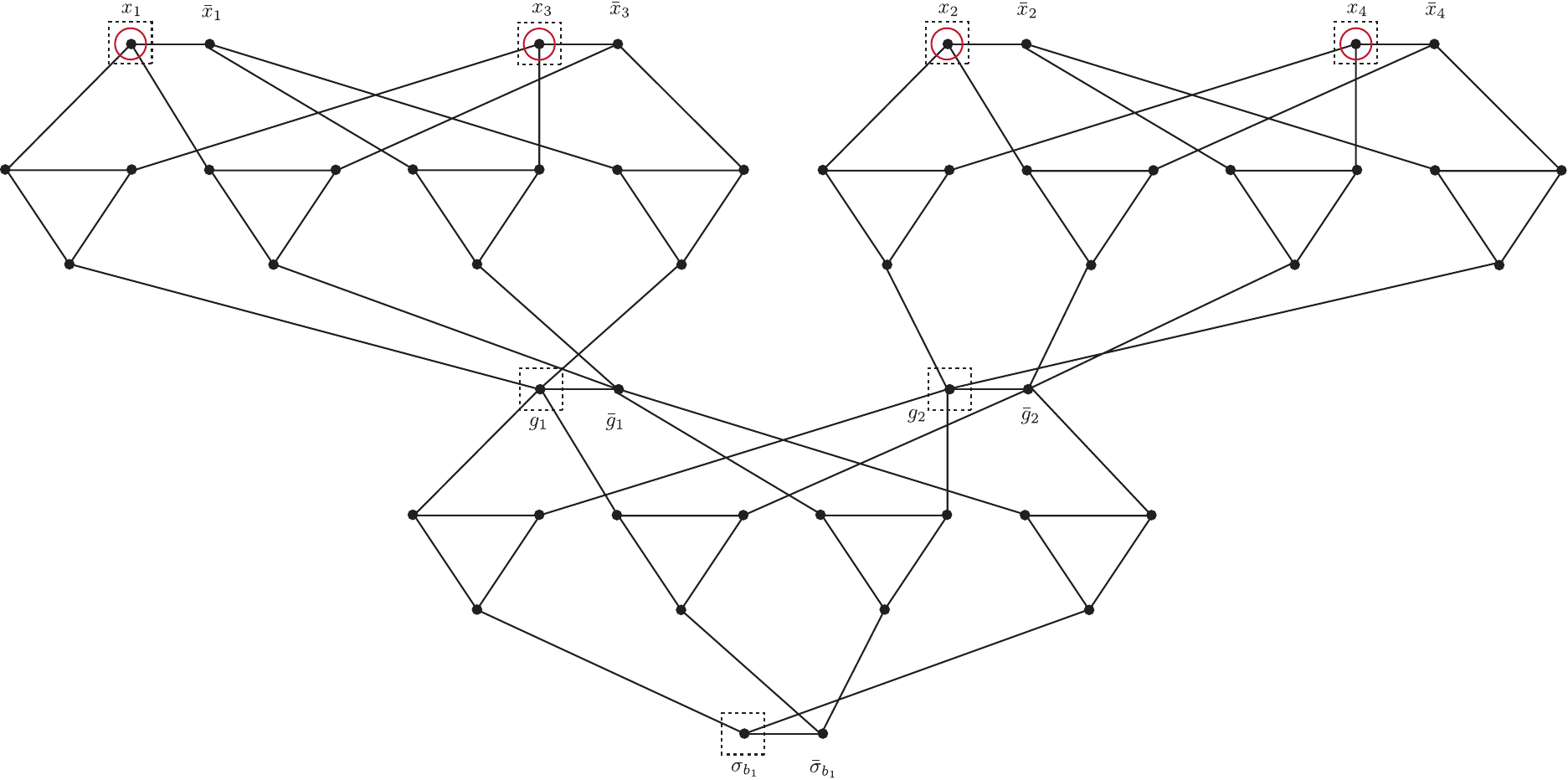}
      \caption{ The formula $\Phi$ (\cref{eq:Phi}) is satisfiable iff Ising model with Hamiltonian \cref{eq:HamPotts} in the graph $G_\Phi$ shown here has a ground state of 0 energy.
              The variables belonging to $R$ (see \cref{def:faithful}) are marked with dashed squares, and those in the physical set $P$ (see \cref{def:simulation}) are marked with red circles.
   }
      \label{fig:Potts}
    \end{figure}

  \item
    If we additionally want to reproduce the partition function, we need to make the following modification.
    Note first that each formula $\psi$ introduces a degeneracy 3 in the number of solutions (by the same argument as for $\phi$, see \cref{im:Z-potts} in \cref{ex:Potts}).
    Since the formulas of \cref{eq:phi-clock} have a different number of $\phi$ and $\psi$s, we modify $\Phi_0$ and $\Phi_2$ they will introduce a different degeneracy. This is easily solved by modifying  $\Phi_0$ and  $\Phi_2$ in an innocuous way, for example as follows,
    \begin{equation}
      \begin{array}{l}
        \Phi_0:= \phi(x_1,x_3,g_1) \land \phi(x_2,x_4,g_2) \land \phi(g_1,g_2,b_0) \land \phi(x_1,x_3,g_3)\\
        \Phi_{2} = \psi(x_1,x_3,g_1) \land \psi(x_2,x_4,g_2) \land \phi(g_1,g_2,b_2) \land \phi(x_1,x_3,g_3)
      \end{array}
    \end{equation}
    Now, each configuration $(x_1,x_2,x_3,x_4)$ introduces a degeneracy $3^4=81$.
    Defining $h_e$ as in \cref{eq:he-potts}, we obtain
    \begin{equation}
      \sum_{\sigma'_1,\sigma'_2} e^{-\beta h_e'(\sigma'_1,\sigma'_2)} = \frac{1}{81} \sum_{ \sigma, \: h_e(\sigma <\Delta} e^{-\beta h_e(\sigma)} + \sum_{\sigma_i, h_e(\sigma_i \geq \Delta} e^{-\beta h_e(\{\sigma_i\})}
    \end{equation}
    Finally, for the entire partition function we have
    \begin{equation}
      \sum_{\sigma'} e^{-\beta H'_{G'}(\sigma')} = \frac{1}{81^{|E'|}} \sum_{\sigma_i, H(\sigma) <\Delta}e^{-\beta H_G(\sigma)} + \sum_{\sigma, H_G(\sigma) \geq \Delta} e^{-\beta H_G(\sigma)}
    \end{equation}
  \end{enumerate}
\end{example}

Finally, we give an example that illustrates how to apply \cref{cor:cont} to continuous spins.

\begin{definition}[XY model]
  \label{def:xy}
  The ``XY model'' is defined as the set of Hamiltonians defined on a graph $G=(V,E)$ of the form
  \begin{equation}
    H_{\textrm{XY},G}(\{\theta_i\}) = \sum_{(i,j)\in E} J_{i,j} \cos(\theta_i-\theta_j)
  \end{equation}
  where $\theta_i\in [0,2\pi)$ for all $i\in V$.
\end{definition}
Note that the XY model can be obtained as a clock model in the limit $q\to \infty$.

\begin{example}[XY model to Ising model with fields]
  We simulate the XY model (\cref{def:xy}) on an arbitrary graph $G'$ with the Ising model with fields on another graph $G$.
  To this end, we first approximate (in the sense of \cref{def:continuous_simulation}) the XY model with a $q$-level clock model (\cref{def:clock}) on the same graph $G'$ and with the same set of couplings $\{J_{i,j}\}$.
  Finally, we simulate this clock model with an Ising model with fields as explained in \cref{ex:clock}.

  To simulate the XY model with a clock model, observe that we just need to approximate $\cos\left(\theta_i-\theta_j\right)$ by $\cos\left(2 \pi (\sigma_i-\sigma_j)/q)\right)$ with $\sigma_i\in\{1,\ldots, q\}$.
  Applying \cref{lem:continuous_term} with $L=1$, $D=2$ we achieve this with accuracy $\delta=\tfrac{2}{\sqrt{q}-1}$.
\end{example}

\end{document}